\documentclass[doublecolumn]{IEEEtran}

\pdfminorversion=4

\usepackage{stfloats}
\usepackage{amsfonts}
\usepackage{amssymb}
\usepackage{amsthm}
\usepackage{cite}
\usepackage{float}
\usepackage{color}
\usepackage{stfloats,fancyhdr}
\usepackage{amsmath,bm}
\usepackage{algorithm}
\usepackage{algorithmic}
\usepackage{multirow}
\usepackage{changepage}
\usepackage[normalem]{ulem}
\usepackage{amsthm}
%\numberwithin{figure}{section}
\usepackage{balance}

\usepackage{bbm}

\newtheorem{theorem}{Theorem}
\newtheorem{lemma}{Lemma}

\newtheorem{corollary}{Corollary}
\newtheorem{definition}{Definition}
\newtheorem{assumption}{Assumption}
\newtheorem{remark}{Remark}
\newtheorem{example}{Example}

\usepackage{mathtools}
\mathtoolsset{showonlyrefs=true}
%\mathtoolsset{showonlyrefs}

%\DeclareMathOperator{\tr}{tr} \DeclareMathOperator{\ml}{ml}
\ifCLASSINFOpdf
\usepackage[pdftex]{graphicx}
\DeclareGraphicsExtensions{.pdf,.jpeg,.png}
\else
\usepackage[dvips]{graphicx}
\DeclareGraphicsExtensions{.eps}
\fi

\usepackage{subfigure}
\usepackage{fancybox,dashbox}
\usepackage{authblk}
\usepackage{tabularx}

\usepackage{comment}

\newcommand{\myexpect}[1]{\mathsf{E}\left[#1\right]}

\newcommand{\myprob}[1]{\mathsf{Prob}\left[#1\right]}

\newcommand{\mycomment}[1]{{\color{black}{#1}}}

\newcommand\aug{\fboxsep=-\fboxrule\!\!\!\fbox{\strut}\!\!\!}

\usepackage[margin=0.51in]{geometry}
\begin{document}
		
\title{\huge Stability Conditions for Remote State Estimation of Multiple Systems over Multiple Markov Fading Channels}

\author{
	\vspace{-0.0cm}
	Wanchun Liu,~\IEEEmembership{Member,~IEEE}, Daniel E.\ Quevedo,~\IEEEmembership{Fellow,~IEEE}, 
	Karl Henrik Johansson,~\IEEEmembership{Fellow,~IEEE}, \par 
	Branka Vucetic,~\IEEEmembership{Fellow,~IEEE}, Yonghui Li,~\IEEEmembership{Fellow,~IEEE}
\vspace{-0.5cm}
}
%\author{Wanchun Liu,~\IEEEmembership{Member,~IEEE}, Daniel E.\ Quevedo,~\IEEEmembership{Fellow,~IEEE}, Karl~Henrik~Johansson,~\IEEEmembership{Fellow,~IEEE}, Branka~Vucetic,~\IEEEmembership{Fellow,~IEEE},
%and Yonghui Li,~\IEEEmembership{Fellow,~IEEE}
%}
\maketitle

\begin{abstract}
\let\thefootnote\relax\footnote{W. Liu, B. Vucetic, and Y. Li are with School of Electrical and Information Engineering, The University of Sydney, Australia.
	Emails:	\{wanchun.liu,\ yonghui.li,\ branka.vucetic\}@sydney.edu.au. 
D. E. Quevedo is with the School of Electrical Engineering and Robotics, Queensland University of Technology (QUT), Brisbane, Australia.	Email: daniel.quevedo@qut.edu.au.
K. H. Johansson is with School of Electrical Engineering and Computer Science, KTH Royal
Institute of Technology, Stockholm, Sweden. Email: kallej@kth.se.
}
We investigate the stability conditions for remote
state estimation of multiple linear time-invariant (LTI) systems over multiple wireless time-varying communication channels.
We answer the following open problem: what is the fundamental requirement on the multi-sensor-multi-channel system to guarantee the existence of a sensor scheduling policy that can stabilize the remote estimation system? We propose a novel policy construction and analytical framework and derive the necessary-and-sufficient stability condition in terms of the LTI system parameters and the channel statistics.
\begin{comment}
We consider remote state estimation of multiple discrete-time linear time-invariant (LTI) systems over multiple wireless time-varying communication channels.
Each system state is measured by a sensor,
and the measurements from sensors are sent to a remote estimator over the shared wireless channels in a scheduled manner.
We answer the following open problem: what is the fundamental requirement on the multi-sensor-multi-channel system to guarantee the existence of a sensor scheduling policy that can stabilize the remote estimation system?
To tackle the problem, we propose a novel policy construction method, and develop a new analytical approach by applying the asymptotic theory of spectral radii of products of non-negative matrices.
A necessary and sufficient stability condition is derived in terms of the LTI system parameters and the channel statistics, which is more effective than existing sufficient conditions available  in the literature.
\end{comment}
\end{abstract}

\begin{IEEEkeywords}
Estimation, Kalman filtering, linear systems, stability, mean-square error, Markov fading channel
\end{IEEEkeywords}

\vspace{-0.2cm}
\section{Introduction}
\subsection{Motivation}
\mycomment{Industry 4.0, also known as the Fourth Industrial Revolution, is the automation of traditional manufacturing and
industrial processes through customized and flexible mass production~\cite{Indusrtial40}. 
Replacing communication cables with wireless devices in conventional factories will be a game-changer: In particular, for automatic control, Industry
4.0 will make use of large-scale, interconnected deployment of massive spatially distributed industrial devices such as
sensors, actuators, and controllers. 
Given their  low-cost and scalable deployment, 
wireless remote state estimation from ubiquitous sensors
will play a key role in many industrial control applications, such as smart manufacturing, industrial automation,
e-commerce warehouses, and smart grids~\cite{ParkSurvey}.}

However, unlike wired communications, wireless communications 
are often unreliable and have a limited spectrum for transmission~\cite{tse2005fundamentals}.
Consequently, when wireless sensors are deployed in a remote estimation system,  scheduling policies need to be designed to allow sensors to update the measurement data over a limited number of frequency channels.
The design of such transmission schedules is especially challenging since, due to variability of the environment, wireless channels are time-varying. The transition process of associated channel fading states are commonly modeled as a Markov processes~\cite{Parastoo,markovchannel1,markovchannel2}, wherein different channel states lead to different packet drop probabilities.
Due to transmission scheduling and packet dropouts, a multi-sensor remote estimator cannot correctly receive all the sensor measurements. This degrades the estimation performance and can even lead to instability, i.e., the expected estimation error covariances may become unbounded.

In this note, we tackle the \textbf{fundamental problem: what are necessary and sufficient conditions on system parameters that ensure stochastic stability of a multi-sensor remote estimator under multiple Markov fading channels?}

\subsection{Related Works}
Existing work related to multi-system remote estimation and control can be divided into two categories: perfect (wired) and imperfect (wireless) communication channels.

\textbf{Perfect communication channels.}
Early research in stability analysis of multi-control-loop transmission scheduling over single and multiple
independent communication channels involved periodic and aperiodic scheduling policies~\cite{Hristu,Rehbinder,Zhang,Walsh,IWAKI}, assuming perfect communication channels.
These works have only determined the sufficient conditions to guarantee the existence of a scheduling policy that can stabilize the networked systems.
\begin{comment}
The scheduling of sensor-controller transmissions of multiple control loops over shared communication channels was considered in~\cite{Hristu,Rehbinder}.
In \cite{Hristu}, a sufficient condition was derived to guarantee the existence of a periodic scheduling policy that can stabilize a number of control loops;
in \cite{Rehbinder}, a periodic scheduling policy was investigated that achieves optimal control performance.
Transmission scheduling of a multi-sensor-multi-actuator system was considered in~\cite{Zhang}, where sensors and actuators of a plant exchange information with  a  remote  controller  via  a  shared  communication  medium. A sufficient condition was derived to guarantee the existence of a periodic scheduling policy that can stabilize the system.
%
Different from~\cite{Hristu,Rehbinder,Zhang},
a dynamic scheduler of a multi-plant system was proposed in~\cite{Walsh}, where the sensor-controller and controller-actuator communications shared a single channel. 
The scheduler does not rely on a predefined sequence but decides access to the communication channel dynamically at each time-step. The decision depends on the current plant states, namely, the system with the largest state error has a higher priority to be scheduled.
Optimal remote estimation for multi-hop network scheduling was considered in~\cite{IWAKI}.
\end{comment}

\textbf{Imperfect communication channels.}
In practice, wireless channels are not error-free, leading to transmission errors and packet dropouts.  In industrial control environments   comprising moving machines and mobile robots, 
the channel quality is time-varying~\cite{papermill}.
Unlike single systems for which remote estimation and control that have been well investigated (e.g., \cite{Schenato07ProcIEEE,schenato2008optimal} for independent and identically distributed packet dropout scenarios and~\cite{liu2020remote} for  Markovian packet dropout), 
multi-system remote estimation and control over wireless channels have not drawn as much research attention, until recent efforts motivated by  standardization and growing deployment of wireless technology.

Considering a wireless control architecture with multiple control loops over shared wireless fading channels,   optimal dynamic transmission scheduling policies were investigated in~\cite{gatsis2015opportunistic,Eisen} with different objective functions. The scheduling decision in each step depends on both the wireless channel and control loop states.
\mycomment{In~\cite{alexscheduling} and~\cite{Wu2018Auto}, sensor transmission scheduling over  single and multiple packet drop channels of remote estimators were investigated, respectively, and some sufficient mean-square stability conditions in terms of the system parameters and optimal transmission scheduling policies were analyzed.
If the stability conditions hold, classic Markov decision process (MDP) methods were adopted for finding the optimal scheduling policies (see e.g., \cite{KangTWC,KangJIoT,Huang2021Length}).}
The follow-up work~\cite{LEONG2020108759} considered a time-correlated Markov fading channel scenario, and derived a sufficient condition to guarantee the existence of a deterministic and stationary scheduling policy that can stabilize the remote estimator. A deep reinforcement learning method was proposed as well to find the optimal scheduling policy. The approach was further applied for solving a transmission scheduling problem of a fully distributed networked control system in~\cite{liu2021DRL}.
In \cite{EGWPeters}, scheduling policy design of a spatially distributed large control system with many sensors and actuators based on the   wireless network standard IEEE 802.15.4 was investigated.

\mycomment{We also note that the conventional Markov jump linear systems theory~\cite{costa2006discrete} can be used to elucidate stability conditions of networked systems using a given transmission policy. However, MJLS theory does not provide much insights about the existence of a dynamic scheduling policy that can stabilize the networked system over shared communication channels.}

\subsection{Contributions}
{\color{black}In this note, we consider a multi-sensor remote estimator with multiple frequency channels, where individual sensors measure different physical processes.
We allow  the channels to be time-varying and correlated in both frequency and time, as is common in practical applications~\cite{tse2005fundamentals}. We derive a necessary and sufficient mean-square stability condition in terms of the physical process parameters and the fading channel statistics. Our result establishes that there exists at least one sensor scheduling policy over the frequency channels that can stabilize the remote estimator if and only if the condition holds. 
\mycomment{The stability condition will provide practical design guidelines for stabilizing multi-sensor remote estimation systems over shared wireless medium in Industry 4.0.}
 
It is worth emphasizing that the stability condition depends on the essential parameters of the physical processes of interest and the communication channels, rather than on the specific scheduling policy employed. 
In fact, to find merely sufficient conditions for stability, one can construct a specific policy and analyze its properties (e.g.,~\cite{LEONG2020108759,wu2018optimal}). Such a sufficient condition, however, is commonly not tight  and thus cannot be proved to be necessary.
The analysis of necessary conditions is challenging as there exists a combinational number of scheduling policies.

To the best of our knowledge, necessary and sufficient stability conditions have not been established before for remote estimation or control over multiple systems and wireless communication channels.
In the present work,  we consider a  general correlated channel, wherein the channel state in each frequency channel is a random variable and correlated with the other channel states.
To tackle the challenge, we develop a novel policy construction method and a stochastic estimation-cycle based analytical approach. 
We develop an asymptotic theory for the spectral radius of a product of non-negative matrices to prove our key result.}
%We also apply the analytical framework to four alternative remote estimation architectures with different network setups and derive necessary and sufficient stability conditions for them.
%
%\subsection{Outline and Notations}	
%The remainder of this paper is organized as follows:
%Section~\ref{sec:sys} presents the model of the multi-sensor-multi-channel remote estimator of interest.
%Section~\ref{sec:key} presents and proves the main result of the paper.
%Section~\ref{sec:ext} extends the main result to   four alternative  network setups.
%Section~\ref{sec:con} draws conclusions.

\emph{Notations:} 
Sets are denoted by calligraphic capital letters, e.g., $\mathcal{A}$.
$\mathcal{A} \backslash \mathcal{B}$ denotes set subtraction.
Matrices and vectors are denoted by capital and lowercase upright bold letters, e.g., $\mathbf{A}$ and $\mathbf{a}$, respectively.
$\vert \mathcal{A}\vert$ denotes the cardinality of the set $\mathcal{A}$.
$\mathsf{E}\left[A\right]$ is the expectation of the random variable $A$.
$(\cdot)^{\top}$ is the matrix transpose operator. $\| \mathbf{v} \|_1$ is the sum of the vector $\mathbf{v}$'s elements. 
$|\mathbf{v}| \triangleq \sqrt{\mathbf{v}^\top \mathbf{v}}$ is the Euclidean norm of a vector $\mathbf{v}$.
$\text{Tr}(\cdot)$ is the trace operator. $\text{diag}\{\mathbf{v}\}$ denotes the diagonal matrix with the diagonal elements taken from $\mathbf{v}$. $\mathbb{N}$ and $\mathbb{N}_0$ denote the sets of positive and non-negative integers, respectively.
$\mathbb{R}^m$ denotes the $m$-dimensional Euclidean space.
$\left[u\right]_{B \times B}$ denotes the $B \times B$ matrix with identical elements~$u$.
$[\mathbf{A}]_{j,k}$ denotes the element at the $j$th row and $k$th column of a matrix $\mathbf{A}$.
$\{v\}_{\mathbb{N}_0}$ denotes the semi-infinite sequence   $\{v_0,v_1,\dots\}$.
$\rho(\mathbf{A})$ is the spectral radius of $\mathbf{A}$, i.e., the  largest absolute value of its eigenvalues.
$\vartheta(\mathbf{A})$ is the sum of all the elements of $\mathbf{A}$.
%$\limsup$ and $\liminf$ is the limit superior and inferior operators, respectively.

\section{System Model}\label{sec:sys}
We consider a remote estimator with $N$ sensors measuring $N$ independent physical process,   as illustrated in Fig.~\ref{fig:system_model}. The index set of the sensors is denoted as $\mathcal{N} \triangleq \{1,2,\dots,N\}$.
The sensors are connected with a local gateway, which collects the  measurements and forwards them to a remote estimator. 
Connections between sensors and the gateway are reliable and not scheduled, while the gateway to remote estimator communications are wireless and scheduled due to  bandwidth limitations.
\mycomment{We note that the typical connection density in the Industrial IoT scenario is $10^6$/km$^2$. Bandwidth sharing among a large number of wireless devices is an issue in practice~\cite{MassivIoT}.}
\begin{figure}[t]
	\centering\includegraphics[scale=0.45]{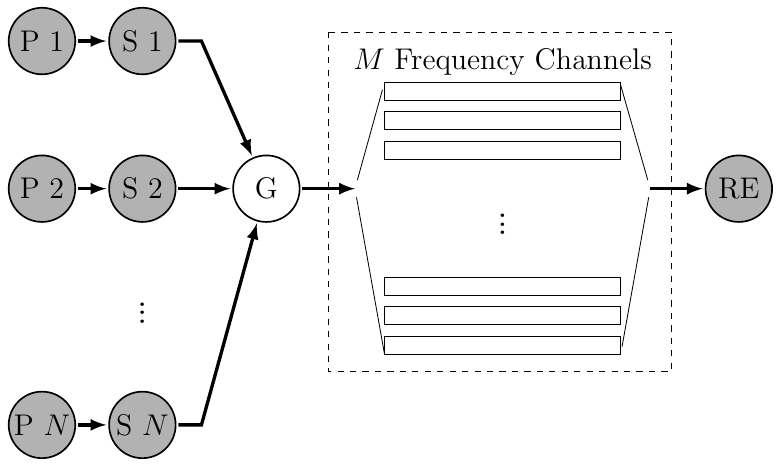}
	\vspace{-0.2cm}	
	\caption{The multi-sensor-multi-channel remote estimator with a single gateway. Processes, sensors, gateway and remote estimator are denoted as P$1$, P$2$, ..., P$N$,  S$1$, S$2$, ..., S$N$, G and RE, respectively.}
	\vspace{-0.2cm}
	\label{fig:system_model}
\end{figure}

The discrete-time linear time-invariant (LTI) model of the measurement of each process $n$ is given as~\cite{schenato2008optimal,shi2012optimal,yang2013schedule}
\begin{equation} \label{sys}
\begin{aligned}
\mathbf{x}_n{(t+1)} &= \mathbf{A}_n \mathbf{x}_n(t) + \mathbf{w}_n(t),\\
\mathbf{y}_n(t) &= \mathbf{C}_n\mathbf{x}_n(t) + \mycomment{\mathbf{z}_n(t)},
\end{aligned}
\end{equation}
where 
$\mathbf{x}_n(t) \in \mathbb{R}^{l_n}$ is the process state vector, $\mathbf{A}_n \in \mathbb{R}^{{l_n} \times {l_n}}$ the state transition matrix, $\mathbf{y}_n(t) \in \mathbb{R}^{r_n}$ the measurement vector of the sensor attached to the process, $\mathbf{C}_n \in \mathbb{R}^{{r_n} \times {l_n}}$ the measurement matrix, $\mathbf{w}_n(t) \in \mathbb{R}^{l_n}$ and $\mathbf{z}_n(t) \in \mathbb{R}^{r_n}$ are the process and measurement noise vectors, respectively. 
We assume $\mathbf{w}_n(t)$ and $\mathbf{z}_n(t)$ are independent and identically distributed (i.i.d.) zero-mean Gaussian processes with corresponding covariance matrices $\mathbf{W}_n$ and $\mathbf{Z}_n$, respectively.
Without loss of generality, we assume that 
$ \rho^2(\mathbf{A}_1) \geq \rho^2(\mathbf{A}_2)\geq \dots \geq \rho^2(\mathbf{A}_N)$. 

\subsection{Local Estimation}
Each sensor adopts a local Kalman filter (KF) to estimate its process before sending to the gateway~\cite{shi2012optimal,yang2013schedule,liu2020remote}.
We have
\begin{subequations}\label{sub:1}
	\begin{align}
	\mathbf{x}^s_n ({t|t-1})&=\mathbf{A}_n \mathbf{x}^s_n ({t-1})\\
	\mathbf{P}^s_n ({t|t-1})&=\mathbf{A}_n \mathbf{P}^s_n ({t-1}) \mathbf{A}^{\top}_n+\mathbf{W}_n\\
	\mathbf{K}_n (t)&=\mathbf{P}^s_n ({t|t-1}) \mathbf{C}^{\top}_n(\mathbf{C}_n \mathbf{P}^s_n ({t|t-1}) \mathbf{C}^{\top}_n+\mathbf{Z}_n)^{-1}\\
	\mathbf{x}^s_n (t)&=\mathbf{x}^s_n ({t|t-1})+\mathbf{K}_n (t)(\mathbf{y}_n({t})-\mathbf{C}_n \mathbf{x}^s_n ({t|t-1}))\\
	\mathbf{P}^s_n (t)&=(\mathbf{I}_n-\mathbf{K}_n ({t}) \mathbf{C}_n)\mathbf{P}^s_n ({t|t-1})
	\end{align}
\end{subequations}
where $\mathbf{I}_n$ is the $l_n \times l_n$ identity matrix, $\mathbf{x}^s_n ({t|t-1})$ is the prior state estimate, $\mathbf{x}^s_n (t)$ is the posterior state estimate at time $t$, $\mathbf{K}_n (t)$ is the Kalman gain. The matrices $\mathbf{P}^s_n ({t|t-1})$ and $\mathbf{P}^s_n (t)$ represent the prior and posterior error covariance at the sensor at time $t$, respectively. The first two equations above present the prediction steps while the last three equations correspond to the updating steps.
%Let $\mathbf{x}^s_n(t)$ denote the sensor $n$'s estimate of $\mathbf{x}_n(t)$. 
In particular, $\mathbf{x}^s_n (t)$ is the sensor $n$'s estimate of $\mathbf{x}_n(t)$ at time $t$, i.e., the pre-filtered measurement of~$\mathbf{y}_t$, with the estimation error covariance $\mathbf{P}^s_n (t)$ defined as:
$$ \mathbf{P}^s_n(t) \triangleq \myexpect{(\mathbf{x}^s_n(t)-\mathbf{x}_n(t))(\mathbf{x}^s_n(t)-\mathbf{x}_n(t))^\top}.$$
We focus on the \emph{remote} estimation stability and  assume that the \emph{local} KFs are stable and operate in   steady state~\cite{shi2012optimal,yang2013schedule,liu2020remote}, i.e., $\mathbf{P}^s_n(t) = \mathbf{\bar{P}}_n, \forall t\in \mathbb{N}, n \in \mathcal{N}$.

\subsection{Markov Channel}
We assume that there exist only $M<N$ frequency channels that can be used for transmission of sensor data.  The channels are correlated in both time and frequency domains as detailed below.

	The $M$-channel (vector) state $\mathbf{h}(t)$ is modeled as an aperiodic Markov chain with $\bar{M}$ irreducible channel states, $\mathcal{S} \triangleq \{\mathbf{h}_1,\mathbf{h}_2,\dots,\mathbf{h}_{\bar{M}}\}$, where $\mathbf{h}_i \triangleq [h_{i,1},h_{i,2},\dots,h_{i,M}]^\top \in \{0,1\}^M$.
Here, $h_{i,j}=0$ or $1$ means successful (on) or failed (off) transmission in the $j$th frequency channel at the $i$th channel state. 
%In other words, we consider on-off Markov channels.
Let $\mathbf{M} \in \mathbb{R}^{\bar{M}} \times \mathbb{R}^{\bar{M}}$ denote the state transition probability matrix, where
\begin{equation}
[\mathbf{M}]_{i,j}\triangleq \myprob{\mathbf{h}(t) =\mathbf{h}_j \vert \mathbf{h}(t-1) =\mathbf{h}_i} = p_{i,j}.
\end{equation}

Let $\tilde{\mathcal{S}_{i}} \in \mathcal{S}$ denote the set of channel state with an `on' state in the $i$th frequency channel, i.e.,
\begin{equation}
\tilde{\mathcal{S}_{i}} \triangleq \{\mathbf{h}_j: h_{j,i} =1, j \in \mathcal{\bar{M}}\}, i\in \mathcal{M},
\end{equation}
where $\mathcal{M} \triangleq \{1,2,\dots,M\}$ and $\mathcal{\bar{M}} \triangleq \{1,2,\dots,\bar{M}\}$.

We make the following assumption on the availability of the channel state.
\begin{assumption}[Known Previous Channel State~\cite{LEONG2020108759}]\label{ass:pre-channel}
	%\normalfont
	At time $t\in \mathbb{N}$, the current channel state is unknown but the previous channel state $\mathbf{h}(t-1)$ is available.
\end{assumption}

\subsection{Transmission Scheduling and Remote Estimation}
\sloppy In each time slot, the gateway collects $N$ packets carrying the sensor estimates $\{\mathbf{\hat{x}}^s_1(t),\dots,\mathbf{\hat{x}}^s_N(t)\}$.
It schedules at most $M$ of the packets and sends them through $M$ frequency channels to the remote estimator, as illustrated in Fig.~\ref{fig:system_model}. 
\mycomment{In practice, one could adopt a multiplexing scheme such as the orthogonal frequency-division multiplexing (OFDM) for transmitting multi-stream data in parallel.}
Each frequency channel can transmit at most one packet. The unscheduled packets are discarded.
\mycomment{The communication protocol for gateway-to-remote-estimator transmission is user datagram protocol (UDP)~\cite{Schenato07ProcIEEE}, which is widely adopted for real-time communications.}
We make the following assumption on transmission redundancy.
\begin{assumption}[Disabled Redundant Transmissions] \label{ass:no-redundent}
%\normalfont	
In each time slot, each packet can take at most one frequency channel for transmission.
\end{assumption}

Due to the transmission scheduling and the fading channels, packets carrying the estimated $N$ process states may or may not arrive at the remote estimator. Let $\gamma_n(t)=1$ denote successful detection of sensor $n$'s packet at time $t$, $n \in \mathcal{N}$.
If $\gamma_n(t)=0$, the packet is not scheduled or is scheduled but with failed detection.
It is also assumed that each packet transmission has a unit delay that is equal to the sampling period of the system.
The optimal remote estimator in the sense of minimum mean-square error (MMSE) is obtained as~\cite{schenato2008optimal}
\begin{equation}
\hat{\mathbf{x}}_n(t) = \begin{cases}
\mathbf{A}_n \hat{\mathbf{x}}_n(t-1),& \gamma_n(t-1)=0,\\
\mathbf{A}_n \hat{\mathbf{x}}_n^s(t-1),& \gamma_n(t-1)=1,
\end{cases}
\end{equation}
and can be simplified as~\cite{liu2020remote}
\begin{equation}\label{general_estimater}
\hat{\mathbf{x}}_n(t) = \mathbf{A}^{\phi_n(t)}_n \hat{\mathbf{x}}^s_n(t-(\phi_n(t))),
\end{equation}
where $\phi_n(t)\in \mathbb{N}$ is the time duration between the previous successful transmission and the current time $t$, and can be regarded  as the \emph{age-of-information} (AoI)~\cite{kaul2012real}.

From the above it follows that the estimation error covariance of process $n$ is given as
\begin{align} \label{covariance1}
\mathbf{P}_n(t) 
&\triangleq  \mathsf{E}\left[(\hat{\mathbf{x}}_n(t)-\mathbf{x}_n(t))(\hat{\mathbf{x}}_n(t)-\mathbf{x}_n(t))^{\top}\right]\\
&=\zeta^{\phi_n(t)}(\bar{\mathbf{P}}_n), \label{general_form}
\end{align}
where \eqref{general_form} is obtained by substituting \eqref{general_estimater} and \eqref{sys}  into \eqref{covariance1} and
\begin{equation}\label{eq:v}
\zeta_n(\mathbf{X})\triangleq \mathbf{A}_n\mathbf{X}\mathbf{A}_n^{\top}+\mathbf{W}_n
\end{equation}
$$\zeta_n^{1}(\cdot) \triangleq \zeta_n(\cdot), \quad \zeta_n^{m+1}(\cdot)  \triangleq \zeta_n (\zeta_n^{m}(\cdot)),\quad m\geq 1. $$ 
Thus, the quality of the remote estimation error of process $n$ in time slot $t$ can be quantified via $\text{Tr}\left(\mathbf{P}_n(t) \right)$. 
For ease of exposition, we introduce the following function
\begin{equation}\label{eq:c}
c_n(i)\triangleq \text{Tr}\left(\zeta^i_n(\bar{\mathbf{P}}_n)\right), \forall i\in \mathbb{N}
\end{equation}
and note that, using \eqref{general_form}, we can write
\begin{equation} \label{trace}
\text{Tr}\left(\mathbf{P}_n(t) \right) \triangleq c_n(\phi_n(t)).
\end{equation}
Therefore, the estimation quality of process $n$ at time $t$ is a function of its AoI state $\phi_n(t)$.

Propositions~1 and 2 of \cite{liu2020remote} allow us to state the following property of $c_n(\cdot)$:
\begin{lemma}\label{lem:c}
	%\normalfont	
	For any $\epsilon>0$, there exists $N'>0$, $\kappa >0$ and $\eta>0$ such that $$c_n(i) < \kappa \left(\rho^2(\mathbf{A}_n)+\epsilon \right)^i, \forall i>N',$$ and 
	$$c_n(i) \geq \eta (\rho(\mathbf{A}_n))^{2i}, \forall i>N'. $$
\end{lemma}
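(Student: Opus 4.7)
The plan is to reduce both bounds to estimates on powers $\mathbf{A}_n^i$, by first unrolling the recursion and then appealing to Gelfand's formula on the one hand and to a trivial spectral-radius lower bound on the other. Specifically, a direct induction on \eqref{eq:v} yields the Lyapunov-type expansion
\begin{equation}
\zeta_n^i(\bar{\mathbf{P}}_n) \;=\; \mathbf{A}_n^i\,\bar{\mathbf{P}}_n\,(\mathbf{A}_n^\top)^i \;+\; \sum_{k=0}^{i-1} \mathbf{A}_n^k\,\mathbf{W}_n\,(\mathbf{A}_n^\top)^k,
\end{equation}
so everything is controlled by the growth of $\|\mathbf{A}_n^k\|$.

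For the upper bound I would argue as follows. Fix $\epsilon>0$ and set $b \triangleq \rho^2(\mathbf{A}_n)+\epsilon$. By Gelfand's formula, for every $\delta>0$ there is a constant $C_\delta$ with $\|\mathbf{A}_n^k\|_2 \leq C_\delta (\rho(\mathbf{A}_n)+\delta)^k$; pick $\delta$ so that $(\rho(\mathbf{A}_n)+\delta)^2 < b$. Combined with $\mathrm{Tr}(\mathbf{X}\mathbf{M}\mathbf{X}^\top) \leq l_n\lambda_{\max}(\mathbf{M})\|\mathbf{X}\|_2^2$ for PSD $\mathbf{M}$, the expansion above gives
\begin{equation}
c_n(i) \;\leq\; l_n\, C_\delta^2\,\lambda_{\max}(\bar{\mathbf{P}}_n)\, b^i \;+\; l_n\, C_\delta^2\,\lambda_{\max}(\mathbf{W}_n)\sum_{k=0}^{i-1} b^k,
\end{equation}
and summing the (geometric) series absorbs the second term into $\kappa\, b^i$ for $i$ larger than some $N'$. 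The only awkward case is when $b\le 1$, in which $c_n(i)$ is simply uniformly bounded; there one either restricts $\epsilon$ so that $b>1$ (the regime of interest for unstable $\mathbf{A}_n$), or argues separately that a bounded sequence is trivially below $\kappa\, b^i$ provided one allows $\kappa$ to depend on $\epsilon$ appropriately.

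For the lower bound, positive semidefiniteness of $\mathbf{W}_n$ and positive definiteness of the steady-state local covariance $\bar{\mathbf{P}}_n$ (which I would justify from standard Kalman filter theory under detectability/stabilizability of $(\mathbf{A}_n,\mathbf{C}_n)$ and $(\mathbf{A}_n,\mathbf{W}_n^{1/2})$) yield
\begin{equation}
\zeta_n^i(\bar{\mathbf{P}}_n) \;\succeq\; \mathbf{A}_n^i\,\bar{\mathbf{P}}_n\,(\mathbf{A}_n^\top)^i \;\succeq\; \lambda_{\min}(\bar{\mathbf{P}}_n)\, \mathbf{A}_n^i (\mathbf{A}_n^\top)^i.
\end{equation}
Taking traces and using $\|\mathbf{A}_n^i\|_F^2 \geq \|\mathbf{A}_n^i\|_2^2 \geq \rho(\mathbf{A}_n^i)^2 = \rho(\mathbf{A}_n)^{2i}$ then gives $c_n(i) \geq \eta\, \rho(\mathbf{A}_n)^{2i}$ with $\eta = \lambda_{\min}(\bar{\mathbf{P}}_n) > 0$, valid for \emph{all} $i$, in particular for $i>N'$.

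I expect the main obstacle to be the upper bound, where the geometric sum in the noise contribution has different qualitative behavior depending on whether $b$ exceeds $1$, and one must choose the Gelfand perturbation $\delta$ carefully relative to $\epsilon$ so that $(\rho(\mathbf{A}_n)+\delta)^{2i}$ actually fits under $b^i$ without spoiling the constant $\kappa$. A lesser subtlety is certifying $\lambda_{\min}(\bar{\mathbf{P}}_n)>0$ so that $\eta$ in the lower bound is strictly positive, but this is a standard consequence of the stationarity assumption on the local Kalman filter invoked just before the lemma.
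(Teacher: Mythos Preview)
The paper does not actually prove this lemma; it simply cites Propositions~1 and~2 of \cite{liu2020remote}. Your direct argument---unrolling $\zeta_n^i(\bar{\mathbf{P}}_n)$ into the Lyapunov sum, bounding $\|\mathbf{A}_n^k\|_2$ above via Gelfand's formula, and bounding $\|\mathbf{A}_n^i\|_2$ below by $\rho(\mathbf{A}_n)^i$---is the standard route and is correct in the regime the paper cares about.

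One point deserves correction. Your second escape for the case $b=\rho^2(\mathbf{A}_n)+\epsilon\le 1$ (``a bounded sequence is trivially below $\kappa b^i$'') does not work when $b<1$: since $\zeta_n^i(\bar{\mathbf{P}}_n)\succeq\mathbf{W}_n$ for every $i\ge 1$, one has $c_n(i)\ge\text{Tr}(\mathbf{W}_n)>0$ whenever the process noise is nonzero, while $\kappa b^i\to 0$, so no finite $\kappa$ can make the inequality hold for all large $i$. In other words, the upper bound in the lemma is literally false for stable $\mathbf{A}_n$ and sufficiently small $\epsilon$. Your first escape is the right one: the paper only invokes Lemma~\ref{lem:c} for processes with $\rho(\mathbf{A}_n)\ge 1$ (the only ones that can destabilize the remote estimator), and in that regime $b>1$ always, the geometric sum is dominated by its final term, and your argument goes through cleanly. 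The positive-definiteness of $\bar{\mathbf{P}}_n$ needed for the lower bound is indeed the standard Kalman steady-state fact you mention.
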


In this work, we solely focus on deterministic stationary scheduling policies.
Let $\nu_n(t) \in \{0,1,\dots,M\}$, $n\in\mathcal{N}$, denote the selected frequency channel for process $n$ at time $t$. The sensor is not scheduled for transmission if $\nu_n(t)=0$.
Since available system states include the current AoI states and the previous channel states,  scheduling polices $\pi(\cdot)$ can be written as
\begin{equation}\label{eq:schedule_0}
\bm\nu(t) = \pi(\bm\phi(t),\mathbf{h}(t-1)),
\end{equation}
where $\bm \phi(t) \triangleq [\phi_1(t),\dots,\phi_N(t)]\in \mathbb{N}^N$ and $\bm\nu(t) = [\nu_1(t),\dots,\nu_N(t)]$.

{Note that  $\mathbf{P}_n(t) $ is a  countable stochastic process taking values from a countably infinite set $$\{\zeta^1_n(\bar{\mathbf{P}}_n),\zeta^2_n(\bar{\mathbf{P}}_n),\dots\}.$$
 If $\rho(\mathbf{A}_n)\geq 1$, then this process will grow during periods of consecutive packet dropouts. Since, due to fading, periods of consecutive packet dropouts have unbounded support, at best one can hope for some type of stochastic stability. Our focus is on  mean-square stability.}

\begin{definition}[Average Mean-Square Stability]
	%\normalfont	
	The $N$-sensor remote estimator over $M$ frequency channels described above is average mean-square stabilizable, if there exists a deterministic stationary policy \eqref{eq:schedule_0}
	such that the sum average estimation mean-square error (MSE) $J$ is bounded, where
	\begin{equation}\label{longterm}
	J\triangleq \sum_{n=1}^{N} J_n
	\end{equation}
	and 
	\begin{equation}
	J_n = \limsup_{T\to\infty}\frac{1}{T}\sum_{t=1}^{T} \mathsf{E}\left[\text{\normalfont Tr}\left(\mathbf{P}_n(t)\right)\right],n\in\mathcal{N}.
	\end{equation}
	
\end{definition}

\section{Key Results} \label{sec:key}
In this section, we present and prove the necessary and sufficient condition for stabilizing the multi-sensor-multi-channel remote estimator in terms of the multi-process parameters and the multi-channel statistics.
As will become apparent establishing such a necessary and sufficient stability condition is highly non-trivial as we consider  transmission scheduling for multiple sensor packets over multiple fading channels that are correlated in both time and frequency domains.
%This  induces a countable (and unbounded) Markov chain in the analysis of the remote estimator.
%Some of the existing works prove sufficient conditions by constructing an easy to analyze policy and find a sufficient condition such that the system is stabilized under the constructed policy (see e.g.,~\cite{LEONG2020108759,wu2018optimal}). These sufficient conditions, however, are not restrictive enough and cannot be proved to be necessary.
%The analysis of necessary condition is more challenging: instead of analyzing the necessary condition of one or several constructed policies, one needs to find the intersection of the necessary conditions of each scheduling policy, however, there is an uncountable number of scheduling policies consider the infinite time horizon of AoI states, and most of them are not analyzable.
%\par Our subsequent analysis makes use of novel policy construction methods and estimation-cycle based analytical approach, leading to a necessary and sufficient stability condition.

\subsection{Stability Condition}
The physical process and channel parameters jointly determine the stability of the overall remote estimator. Our result is stated in terms of the $\bar{M}\times \bar{M}$  probability matrix $\mathbf{E}(\mathbf{v})$  obtained from  the 	channel state transition probability matrix $\mathbf{M}$: 
	\begin{equation} \label{eq:Ev}
\begin{aligned}
	\left[\mathbf{E}(\mathbf{v})\right]_{i,j}
	& \triangleq
	\myprob{\mathbf{h}(t) = \mathbf{h}_j, \mathbf{h}_j \notin \tilde{\mathcal{S}}_{v_i}  \vert \mathbf{h}(t-1) = \mathbf{h}_i}\\
	&=
	  p_{i,j} \mathbbm{1}(\mathbf{h}_j \notin \tilde{\mathcal{S}}_{v_i}),
\end{aligned}
	\end{equation}
\mycomment{where $\mathbf{v}$ is a length-$\bar{M}$ vector and $v_i \in\mathcal{M}$ is the $i$th element of $\mathbf{v}$ denoting the index of selected frequency channel with the observation of the vector channel state $\mathbf{h}_i$.
So $\left[\mathbf{E}(\mathbf{v})\right]_{i,j}$ denotes the probability that the current channel state is $\mathbf{h}_j$ and the packet transmission fails given the previously observed channel state $\mathbf{h}_i$ and the selected frequency channel for transmission $v_i$.}

\begin{theorem}\label{theo:main}
	%\normalfont
	Consider  $\rho_{\max} \triangleq \rho(\mathbf{A}_1)$, and \begin{equation}\label{eq:lambda_inf}
	\lambda_{\infty} \triangleq \liminf\limits_{L\rightarrow \infty} \lambda_L = \min_{L\in\mathbb{N}} \lambda_L,
	\end{equation}
	\begin{equation}\label{eq:lambda_main}
	\lambda_L \triangleq \min_{\mathbf{v}_l \in \mathcal{M}^{\bar{M}}} \rho\left(\mathbf{E}(\mathbf{v}_1) \mathbf{E}(\mathbf{v}_2) \cdots \mathbf{E}(\mathbf{v}_L)\right)^{\frac{1}{L}}
	\end{equation}
	where $\mathbf{v}_l \triangleq [v_{l,1},v_{l,2},\dots,v_{l,\bar{M}}]^\top \in \mathcal{M}^{\bar{M}}$ is a vector of frequency channel selection at the $\bar{M}$ different channel conditions.
	\par 
	The remote estimator described in Section~\ref{sec:sys} under Assumptions~\ref{ass:pre-channel} and~\ref{ass:no-redundent} is stable if and only if
	\begin{equation}\label{eq:key}
	\rho^2_{\max} \lambda_{\infty} <1.
	\end{equation}
\end{theorem}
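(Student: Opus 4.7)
The plan is to prove both directions via the two-sided geometric bounds on $c_n(\cdot)$ from Lemma~\ref{lem:c} combined with an asymptotic theory for the spectral radius of products of the non-negative matrices $\mathbf{E}(\mathbf{v})$. For \emph{sufficiency}, I would first use the fact that the minimum in \eqref{eq:lambda_inf} is attained at some finite period $L^\star$ with a tuple $(\mathbf{v}^\star_1,\ldots,\mathbf{v}^\star_{L^\star})$ realising $\lambda_\infty$, so the hypothesis $\rho^2_{\max}\lambda_\infty<1$ is equivalent to $\rho\bigl(\mathbf{E}(\mathbf{v}^\star_1)\cdots\mathbf{E}(\mathbf{v}^\star_{L^\star})\bigr)<\rho_{\max}^{-2L^\star}$. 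I would then construct an explicit stationary super-frame policy in which the most critical sensor $n=1$ is always assigned frequency $v^\star_{l,i}$ in phase $l\bmod L^\star$ whenever the previous channel state is $\mathbf{h}_i$, while the remaining $N-1$ sensors share the other $M-1$ channels through a round-robin whose frame lengths are tuned to their respective $\rho(\mathbf{A}_n)$. Under this schedule, for each sensor $n$ the tail $\myprob{\phi_n(t)\geq kL^\star}$ is controlled by row sums of $\bigl(\mathbf{E}(\mathbf{v}^\star_1)\cdots\mathbf{E}(\mathbf{v}^\star_{L^\star})\bigr)^k$ and so decays like $\lambda_\infty^{kL^\star}$ by Gelfand's formula; pairing this with the upper bound $c_n(i)<\kappa\bigl(\rho^2(\mathbf{A}_n)+\epsilon\bigr)^i$ from Lemma~\ref{lem:c} and choosing $\epsilon$ sufficiently small renders $\sum_i c_n(i)\myprob{\phi_n(t)=i}$ a convergent geometric series, giving a bounded $J_n$ for every $n$.

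For \emph{necessity}, I would argue contrapositively. If a stationary policy stabilises the overall system then, in particular, $\sup_t \myexpect{c_1(\phi_1(t))}<\infty$, and the lower bound $c_1(i)\geq \eta\rho_{\max}^{2i}$ in Lemma~\ref{lem:c} forces $\sup_t \myexpect{\rho_{\max}^{2\phi_1(t)}}<\infty$. Any policy $\pi(\bm\phi(t),\mathbf{h}(t-1))$ induces, along each channel trajectory, an adapted sequence of channel-assignment vectors $\mathbf{v}_1,\mathbf{v}_2,\ldots\in \mathcal{M}^{\bar{M}}$ for sensor~$1$, and the conditional probability of $L$ consecutive drops starting from state $\mathbf{h}_i$ is exactly a row sum of $\prod_{l=1}^{L}\mathbf{E}(\mathbf{v}_l)$. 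The key step is to show that along any such adapted product the normalised spectral radius is asymptotically no smaller than $\lambda_\infty$; one then obtains $\myexpect{\rho_{\max}^{2\phi_1(t)}}\geq c\,(\rho^2_{\max}\lambda_\infty)^{t/L^\star}$ up to boundary effects, which diverges whenever $\rho^2_{\max}\lambda_\infty\geq 1$ and contradicts stability.

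The \textbf{main obstacle} I expect is this last step in the necessity direction: upgrading the deterministic minimum that defines $\lambda_L$ in \eqref{eq:lambda_main} to a universal lower bound on the spectral growth of every history-dependent sequence drawn from the finite matrix family $\{\mathbf{E}(\mathbf{v}):\mathbf{v}\in\mathcal{M}^{\bar{M}}\}$. This is precisely the ``asymptotic theory of spectral radii of products of non-negative matrices'' flagged in the abstract; the attainability of the infimum in \eqref{eq:lambda_inf}, so that the $\liminf$ is in fact a $\min$, should fall out as a natural by-product via a compactness/pigeonhole argument on the finite index set $\mathcal{M}^{\bar{M}}$ together with submultiplicativity of $\rho$ on non-negative products. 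The multi-sensor coupling in the sufficiency direction is a secondary difficulty, resolved by the super-frame construction above and the monotonicity $\rho(\mathbf{A}_n)\leq \rho_{\max}$, which lets the analysis of sensors $n\geq 2$ be dominated by the same geometric tail that controls sensor~$1$.
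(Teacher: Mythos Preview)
Your sufficiency construction has a genuine gap. You propose that sensor~$1$ permanently occupies one frequency slot (driven by the periodic rule $(\mathbf{v}^\star_1,\ldots,\mathbf{v}^\star_{L^\star})$) while sensors $2,\ldots,N$ share the remaining $M-1$ channels. But the hypothesis $\rho^2_{\max}\lambda_\infty<1$ says nothing about what is achievable with only $M-1$ channels: $\lambda_\infty$ in \eqref{eq:lambda_main} is a minimum over selection vectors $\mathbf{v}\in\mathcal{M}^{\bar M}$ whose coordinates range over \emph{all} $M$ frequencies. If the optimal period-$L^\star$ sequence genuinely alternates between channels (which is the whole point of the optimisation), restricting sensors $n\ge 2$ to a strict subset of frequencies gives a different, possibly much larger, failure radius $\lambda^{(M-1)}_\infty$, and nothing guarantees $\rho^2(\mathbf A_2)\lambda^{(M-1)}_\infty<1$. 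In the boundary case $M=1$ your ``other $M-1$ channels'' do not exist and sensors $2,\ldots,N$ are never scheduled. The paper sidesteps this by a \emph{persistent serial} policy: schedule one sensor at a time, giving it the full $M$-channel menu, until it succeeds, then move to the next. Each sensor therefore sees the same $\lambda_\infty$-tail during its own turn, and the cycle cost of sensor $n$ is analysed conditionally on the durations $T_{(n,1)},\ldots,T_{(n,n-1)}$ spent on the earlier sensors (Section~\ref{sec:suf_policy}).

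Your necessity sketch is closer to the paper's but skips the step that does the work. You need a \emph{lower} bound on a specific row sum of $\prod_{l=1}^L\mathbf{E}(\mathbf{v}_l)$ (the consecutive-failure probability from the actual starting state), whereas $\lambda_\infty$ only lower-bounds the spectral radius of such products. For non-negative matrices the spectral radius bounds the \emph{maximum} row sum from below, not an arbitrary row, so you must argue that the dominant row is visited with positive probability. The paper does this via an estimation-cycle decomposition: it distinguishes pre-cycle and non-pre-cycle channel states, uses Perron--Frobenius to extract a dominant entry (cf.\ \eqref{eq:Lm}--\eqref{eq:vF}), and then invokes the reachability Lemma~\ref{lem:nec_pos} together with Lemma~\ref{lem:tech} to pass from $\max_m\lambda_{m,\infty}$ back to $\lambda_\infty$. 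Finally, your proposed inequality $\myexpect{\rho_{\max}^{2\phi_1(t)}}\ge c(\rho^2_{\max}\lambda_\infty)^{t/L^\star}$ is not the right object: $\phi_1(t)$ resets after each success, so the divergence to establish is in the per-cycle cost $\myexpect{C}=\sum_j g(j)\,\myprob{T=j}$, not in a quantity indexed by calendar time~$t$.
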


Theorem~\ref{theo:main} shows that the stability depends on the spectral radius of the most unstable process and a complex function of the channel state transition probability matrix.	
Provided the condition is satisfied, one can always find a scheduling policy that stabilizes the remote estimator; if \eqref{eq:key} does not hold, then  no stabilizing scheduling policy exists.
Theorem~\ref{theo:main} does not provide direct insights  on the structure of a suitable scheduling policy. However, we will construct a policy with stability guarantees in the proof of the sufficiency part. 

We note that $\lambda_{\infty}$ can be treated as an  inverse quality indicator of the parallel correlated channels:  smaller values of $\lambda_{\infty}$  indicate a  better channel quality.
\mycomment{The infinity in $\lambda_{\infty}$ takes in to account the infinitely many different orders of matrix products in \eqref{eq:lambda_main}.}
\mycomment{The last equality in \eqref{eq:lambda_inf} can be easily obtained by using the property that $\rho(\mathbf{D}^L)^{\frac{1}{L}} =\rho(\mathbf{D})$ holds for any square matrix $\mathbf{D}$ and positive integer $L$.}

\begin{comment}
It is worth emphasizing that even though  mathematically $\lambda_{\infty}$ can be obtained by calculating $\liminf\limits_{L\rightarrow \infty} \lambda_L$,
neither the intermediate variables $L$ nor $\lambda_L$ have a formal physical meaning.
\end{comment}

\begin{remark}[Computations]
\mycomment{Although $\lambda_{\infty}$ in \eqref{eq:lambda_inf} may not be achieved with a finite $L$, one can approximate $\lambda_{\infty}$ by finding the minimum value in $\{\lambda_1,\dots,\lambda_L\}$ in \eqref{eq:lambda_main}. 
Since $\lambda_\infty\leq \min\{\lambda_1,
\dots,\lambda_L\},\forall L\in\mathbb{N}$, we have the sufficient stability condition 
$$\rho^2_{\max} \min\{\lambda_1,
\dots,\lambda_L\} <1,$$ which approximates the necessary and sufficient condition \eqref{eq:key}  when $L$ is large.

In Fig.~\ref{fig:L}, we randomly generate six matrix sets each consisting of eight non-negative $4$-by-$4$ matrices; for each matrix set, we plot $\lambda_L$ as the minimum $L$th root of the spectral radius of the  $L$ matrix product, where each matrix is taken from the matrix set as in \eqref{eq:lambda_main}. It can be observed that $\lambda_L$ may and may not monotonically decrease with the increasing $L$, and a small $L$ (e.g., $L=2$) can almost achieve the minimum value obtained by a large $L$, i.e., $\lambda_2 \approx \min\{\lambda_1,\lambda_2,\dots,\lambda_5\}$ in different cases.
In some cases, $\lambda_1=\lambda_2=\dots=\lambda_5$.

%Numerical results of the sequence $\{\lambda_1,
%\dots,\lambda_L\}$ are included in~\cite{liu2021remote}.
}

\begin{figure}[t]
	\centering
	\vspace{-0.5cm}	
	\includegraphics[scale=0.6]{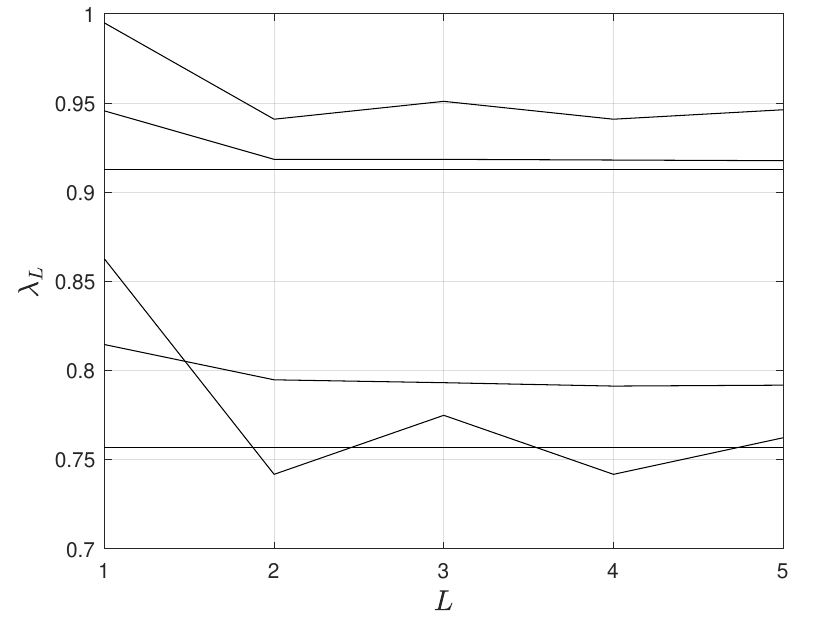}
	\caption{$\lambda_L$ versus $L$ over $6$ randomly generated matrix sets.}
	\vspace{-0.5cm}	
	\label{fig:L}
\end{figure}

\begin{comment}
Actually, \eqref{eq:lambda_main} can be treated as a spectral radius optimization problem of a product of non-negative matrices taken from a finite matrix set $\{\mathbf{E}(\mathbf{v}): \mathbf{v}\in \mathcal{M}^{\bar{M}}\}$ with $M^{\bar{M}}$ elements.
In Fig.~\ref{fig:L}, we randomly generate the state transition probability matrix $\mathbf{M}$, construct the matrix set based on \eqref{eq:Ev} with $M=2$ and $\bar{M}=4$, and plot $\lambda_L$ as a function $L$ with $6$ different matrix set realizations. It can be observed that $\lambda_L$ may and may not monotonically decrease with the increasing $L$, and a small $L$ (e.g., $L=2$) can almost achieve the minimum value obtained by a large $L$, i.e., $\lambda_2 \approx \min\{\lambda_1,\lambda_2,\dots,\lambda_5\}$ in different cases.
\end{comment}
\end{remark}

To the best of our knowledge, Theorem~\ref{theo:main} is the first necessary and sufficient stability condition for remote estimation over a multi-sensor-multi-channel network in the literature.	Before proving our result, we will first establish a relationship to existing results by focusing on a special case. Under the idealized assumption that the Markov channels at different frequencies are independent, a sufficient condition for stability was proved in \cite{LEONG2020108759}.
It corresponds to Corollary~\ref{coro:alex} below.

\begin{corollary}\label{coro:alex}
	%\normalfont	
Consider the special case that Markov channels at different frequencies are independent,
and assume that the $\{0,1\}$ channel state transition probability matrix of frequency channel $m$ is 
\begin{equation}\label{eq:temp_M}
\mathbf{M}^{(m)} \triangleq 
\begin{bmatrix}
\alpha^{(m)}_{00} && \alpha^{(m)}_{01} \\
\alpha^{(m)}_{10} && \alpha^{(m)}_{11} \\
\end{bmatrix}, m \in \mathcal{M}.
\end{equation}
The remote estimator described in Section~\ref{sec:sys} under Assumptions~\ref{ass:pre-channel} and \ref{ass:no-redundent} is stable if~\cite{LEONG2020108759}
\begin{equation}\label{eq:cor1}
\rho_{\max}^2 \alpha^{(m^\star)}_{00} <1,
\end{equation}
where $\alpha^{(m^\star)}_{00} \triangleq \min_{m} \alpha^{(m)}_{00}$. 
\par The sufficient stability condition~\eqref{eq:cor1} is more restrictive than Theorem~\ref{theo:main}. 
\end{corollary}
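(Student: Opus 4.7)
The plan is to verify that condition \eqref{eq:cor1} implies condition \eqref{eq:key}, i.e., to establish $\lambda_\infty \leq \alpha_{00}^{(m^\star)}$ under the assumption of mutually independent per-frequency Markov chains. Since $\lambda_\infty$ is a minimum over scheduling sequences $\{\mathbf{v}_l\}$, it suffices to exhibit a single policy whose induced spectral radius attains this bound.

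Concretely, I would take the constant policy $\mathbf{v}_l = (m^\star, m^\star, \ldots, m^\star)^\top$ for every $l$, i.e., always transmit on the frequency channel $m^\star$ attaining the minimum $\alpha_{00}^{(m^\star)} = \min_m \alpha_{00}^{(m)}$, irrespective of the previous joint channel state. Writing $Q \triangleq \mathbf{E}(\mathbf{v}_1)$ for the resulting packet-error matrix, one has $\mathbf{E}(\mathbf{v}_1)\cdots\mathbf{E}(\mathbf{v}_L) = Q^L$, so $\rho(Q^L)^{1/L} = \rho(Q)$ and consequently $\lambda_L \leq \rho(Q)$ for every $L$; hence $\lambda_\infty \leq \rho(Q)$.

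The key step is to compute $\rho(Q)$ explicitly. Independence forces the joint transition matrix to factorize as $\mathbf{M} = \mathbf{M}^{(1)} \otimes \cdots \otimes \mathbf{M}^{(M)}$, and from \eqref{eq:Ev} one has $Q = \mathbf{M}\,D$, where $D$ is the diagonal $\{0,1\}$-matrix flagging destination states in which channel $m^\star$ is off. This diagonal factor also respects the Kronecker structure, yielding
\begin{equation*}
Q = \mathbf{M}^{(1)} \otimes \cdots \otimes \bigl(\mathbf{M}^{(m^\star)} D_0\bigr) \otimes \cdots \otimes \mathbf{M}^{(M)},
\end{equation*}
with $D_0 = \mathrm{diag}(1,0)$ inserted at position $m^\star$. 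Using $\rho(A \otimes B) = \rho(A)\,\rho(B)$ together with $\rho(\mathbf{M}^{(k)})=1$ for the row-stochastic factors $k \neq m^\star$, the problem collapses to $\rho(Q) = \rho(\mathbf{M}^{(m^\star)} D_0)$. A direct $2\times 2$ calculation shows that the eigenvalues of $\mathbf{M}^{(m^\star)} D_0$ are $\alpha_{00}^{(m^\star)}$ and $0$, so $\rho(Q) = \alpha_{00}^{(m^\star)}$. Multiplying by $\rho_{\max}^2$ delivers the implication $\eqref{eq:cor1} \Rightarrow \eqref{eq:key}$.

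The only real obstacle is a notational one: verifying that the diagonal projector $D$ corresponds to the complement of $\tilde{\mathcal{S}}_{m^\star}$ and that it factors across the Kronecker product precisely because of channel independence. Once that identification is in place, the rest is a one-line consequence of standard spectral-radius identities for Kronecker products, so the bulk of the work is notational rather than analytical.
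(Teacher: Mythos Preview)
Your proposal is correct and follows the same high-level strategy as the paper: both choose the constant channel-selection vector $\mathbf{v}'_{m^\star}=(m^\star,\ldots,m^\star)^\top$, observe that $\lambda_\infty\le\rho(\mathbf{E}(\mathbf{v}'_{m^\star}))$, and then compute this spectral radius to be $\alpha_{00}^{(m^\star)}$. The difference lies only in how that spectral radius is evaluated. You exploit the tensor structure directly: independence gives $\mathbf{M}=\bigotimes_m\mathbf{M}^{(m)}$, the projector onto ``channel $m^\star$ off'' factors as $I_2\otimes\cdots\otimes D_0\otimes\cdots\otimes I_2$, and then $\rho(A\otimes B)=\rho(A)\rho(B)$ together with $\rho(\mathbf{M}^{(k)})=1$ reduces everything to a $2\times2$ eigenvalue computation. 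The paper instead reorders states so that the first $\bar M/2$ have channel $m^\star$ off, observes that $\mathbf{E}(\mathbf{v}'_{m^\star})$ becomes block lower triangular, and then notes that each row of the nonzero diagonal block sums to $\alpha_{00}^{(m^\star)}$ by independence, invoking Perron--Frobenius to identify the spectral radius with the common row sum. Your Kronecker argument is cleaner and makes the role of independence more transparent; the paper's row-sum argument is slightly more elementary in that it avoids Kronecker-product spectral identities. Either way, the substance is the same.
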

\begin{proof}	
We only need to show that $\alpha^{(m^\star)}_{00} =\rho(\mathbf{E}(\mathbf{v}'_{m^\star}))$, where $\mathbf{v}'_{m^\star} \triangleq [m^\star,\dots,m^\star]^\top$, as $\rho(\mathbf{E}(\mathbf{v}'_{m^\star})) \geq \min_{L\in\mathbb{N}}  \min_{\mathbf{v}_l \in \mathcal{M}^{\bar{M}}} \rho\left(\mathbf{E}(\mathbf{v}_1) \mathbf{E}(\mathbf{v}_2) \cdots \mathbf{E}(\mathbf{v}_L)\right)^{\frac{1}{L}}$.	
Without loss of generality, we assume the first half of the channel states in $\mathcal{S}$ have the off state in the $m^\star$th channel, i.e.,
\begin{equation}
	\mathbf{h}_i \notin \mathcal{\tilde{S}}_{m^\star}, \forall i \in \{1,2,\dots,\bar{M}/2\}.
\end{equation}
Then, from  \eqref{eq:Ev}, $\mathbf{E}(\mathbf{v}'_{m^\star})$ can be written as a block lower triangular matrix
\begin{equation}
	\mathbf{E}(\mathbf{v}'_{m^\star})= \begin{bmatrix}
		\mathbf{E}_{00} & \aug& \mathbf{0}\\
		\hline
		\mathbf{E}_{10} & \aug& \mathbf{0}\\
	\end{bmatrix}
\end{equation}
and thus $\rho(\mathbf{E}(\mathbf{v}'_{m^\star})) = \rho(\mathbf{E}_{00})$. Further, it is easy to see that
\begin{equation}
	\sum_{j=1}^{\bar{M}/2} [\mathbf{E}_{00}]_{i,j} = \sum_{j=1}^{\bar{M}/2} \myprob{\mathbf{h}_j \vert \mathbf{h}_i} =\alpha^{(m^\star)}_{00}, i = 1,2,\dots,\bar{M}/2,
\end{equation}	
where the second equality is due to that the frequency channel $m^\star$ is independent to the other channels and the first $\bar{M}/2$ of the (vector) channel states contains all the possible channel states of the rest of the $M-1$ frequency channels.
Using the Perron-Frobenius Theorem~\cite{Perron}, we have  $\rho(\mathbf{E}_{00}) = \alpha^{(m^\star)}_{00}$, which completes the proof.
\end{proof}

\begin{example}
\normalfont
We numerically compare the sufficient stability condition in~\cite{LEONG2020108759} and the necessary and sufficient condition in Theorem~\ref{theo:main} for a two frequency channel scenario with $\rho^2_{\max} =2$.
We consider different channel state transition matrices \eqref{eq:temp_M} for the second frequency channel with parameters: (a) $\alpha^{(2)}_{00}=0.9, \alpha^{(2)}_{11}=0.9$, (b) $\alpha^{(2)}_{00}=0.9, \alpha^{(2)}_{11}=0.1$, (c) $\alpha^{(2)}_{00}=0.6, \alpha^{(2)}_{11}=0.9$, and (d) $\alpha^{(2)}_{00}=0.6, \alpha^{(2)}_{11}=0.1$.
A larger $\alpha^{(2)}_{00}$ and $\alpha^{(2)}_{11}$ lead to a longer channel state memory in the `off' and the `on' state, respectively.
Then the stability regions in terms of the parameters of first frequency channel state transition matrices, $\alpha^{(1)}_{00}$ and $\alpha^{(1)}_{11}$, are shown in Fig.~\ref{fig:compare_alex}.

It is clear that Theorem~\ref{theo:main} has a larger stability region than~\cite{LEONG2020108759} in case (c), which corresponds to that the second frequency channel has a shorter memory in the bad (`off') state and a longer memory in the good (`on') state. Hence when the quality of the second frequency channel is pretty good, the stability requirement on the first channel based on~\cite{LEONG2020108759} is more restrictive than Theorem~\ref{theo:main}. \hfill 
$\square$
\end{example}

\begin{figure}[t]
	\centering
%	\vspace{-0.5cm}	
	\includegraphics[scale=0.6]{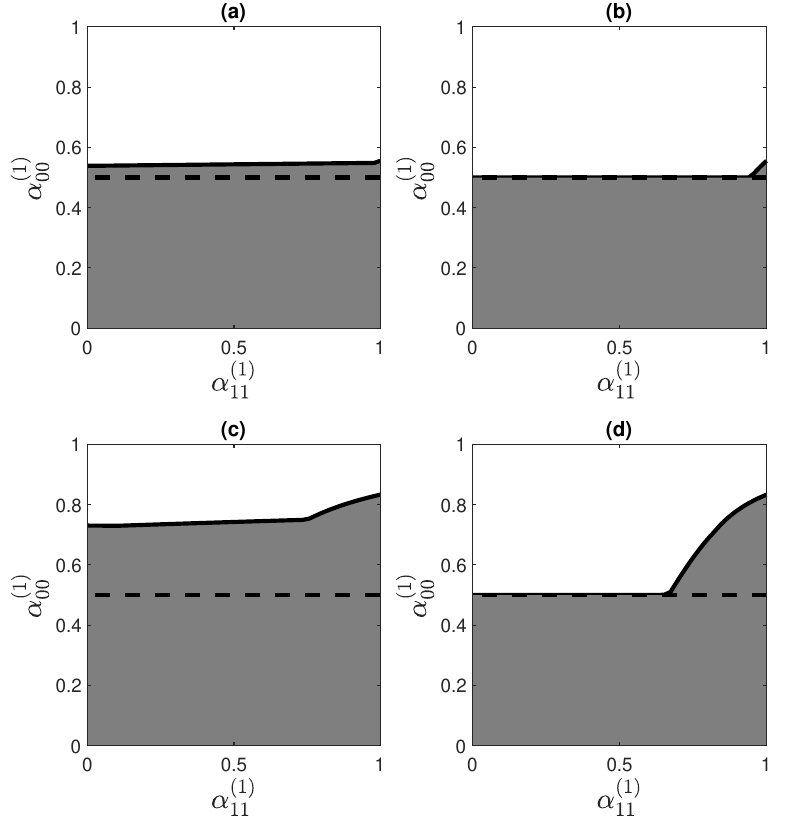}
%	\vspace{-0.3cm}		
	\caption{Comparison between stability conditions of Theorem~\ref{theo:main} (solid line) and~\cite{LEONG2020108759} (dashed line). The gray area denotes the parameter sets that satisfy the stability conditions.}
	\label{fig:compare_alex}
%	\vspace{-0.5cm}	
\end{figure}

We will prove the necessary and sufficiency parts of Theorem~\ref{theo:main} in the sequel.

\subsection{Proof of Necessity}
\subsubsection{Policy Construction}\label{sec:nec_policy}
To prove the necessity, we consider the scenario that only the estimate of the process with the largest spectral radius is scheduled for transmission in each time slot in a selected frequency channel, while the other sensors' estimates are perfectly known by the remote estimator and do not need any transmission. In other words, only one packet is scheduled in each time slot. We recall that  process~$1$ has the largest spectral radius, and we drop out the process index $n$ in the following analysis.

The channel scheduling policy~\eqref{eq:schedule_0} is reduced to
\begin{equation}\label{eq:schedule_1}
\nu(t) = \pi(\phi(t),\mathbf{h}(t-1)) \in \mathcal{M},
\end{equation}
where $\nu(t)$ and $\phi(t)$ denote $\nu_1(t)$ and $\phi_1(t)$, respectively, for ease of notation.

From \eqref{eq:schedule_1}, once the AoI $\phi$ is given, the channel selection rule given the previous channel state information can be written as
\begin{equation}
\mathbf{v}(\phi) = [v_1(\phi),\dots, v_{\bar{M}}(\phi)]\in \mathcal{M}^{\bar{M}},
\end{equation}
where $v_i(\phi) = \pi(\phi,\mathbf{h}_{i})$.

Given the channel selection vector $\mathbf{v}(\phi)$, we define the successful transmission probability matrix $\tilde{\mathbf{E}}(\mathbf{v}(\phi)) \in \mathbb{R}^{\bar{M}} \times \mathbb{R}^{\bar{M}}$, where
\begin{equation}
\begin{aligned}
[\tilde{\mathbf{E}}(\mathbf{v}(\phi))]_{i,j} 
&\triangleq \myprob{\mathbf{h}(t)=\mathbf{h}_j, \mathbf{h}_j \in \tilde{\mathcal{S}}_{v_i(\phi)} \vert \mathbf{h}(t-1)=\mathbf{h}_i} \\
&= p_{i,j} \mathbbm{1}(\mathbf{h}_j \in \tilde{\mathcal{S}}_{v_i(\phi)}).
\end{aligned}
\end{equation}
In other words, $[\tilde{\mathbf{E}}(\mathbf{v}(\phi))]_{i,j}$ is the probability that the current channel state is $\mathbf{h}_j$ and the transmission is successful in the selected $v_i(\phi)$th frequency channel given that the previous channel state is $\mathbf{h}_i$.
Accordingly, we define the failed transmission probability matrix 
\begin{equation}
{\mathbf{E}}(\mathbf{v}(\phi)) \triangleq \mathbf{M} -\tilde{\mathbf{E}}(\mathbf{v}(\phi)) \in \mathbb{R}^{\bar{M}} \times \mathbb{R}^{\bar{M}}.
\end{equation}
%Although the definitions of $\mathbf{v}$ and $\mathbf{D(v)}$ are based on the single-sensor scenario, the following necessary and sufficient condition is for the general multi-sensor scenario.

\subsubsection{Analysis of the Average Cost}\label{sec:nec_analysis}
Similar to \cite{liu2020remote}, we consider an estimation cycle based analysis  method.
Each estimation cycle starts after a successful transmission and ends at the next one, and thus the AoI state $\phi$ at the beginning of each estimation cycle is equal to $1$.
$T_k$ is the sum of transmissions in the $k$th estimation cycle.
$C_k$ is the sum MSE in the $k$th estimation cycle and is a function of $T_k$ as
\begin{equation}\label{g_fun}
C_k = g(T_k) \triangleq \sum_{j=1}^{T_k} c(j).
\end{equation}
The channel state before the $k$th cycle is denoted as $\mathbf{b}_k\in \mathcal{S}$, and a successful transmission occurs at $\mathbf{b}_k$.
Similar to Lemma~1 in \cite{liu2020remote}, we have the following Markovian property of the pre-cycle channel states.
\begin{lemma}\label{lem:G}
	%\normalfont	
	$\{\mathbf{b}\}_{\mathbb{N}_0}$ is a time-homogeneous ergodic Markov chain with $\bar{M}_1\leq \bar{M}$ irreducible states of $\mathcal{S}$. 
	The state transition matrix of $\{\mathbf{b}\}_{\mathbb{N}_0}$ is $\mathbf{G}'$, which is the $\bar{M}_1$-by-$\bar{M}_1$ matrix taken from the top-left corner of
\begin{equation}
\mathbf{G} = \sum_{j=1}^{\infty} \mathbf{\tilde{\Xi}}(j),
\end{equation}
where
\begin{equation}\label{eq:Xi0}
\mathbf{\tilde{\Xi}}(j)= \mathbf{\mathbf{\Xi}}(j-1)\tilde{\mathbf{E}}(\mathbf{v}(j)), j=1,2,\dots.
\end{equation}
and
\begin{equation}\label{eq:Xi}
\mathbf{\mathbf{\Xi}}(j)= 
\begin{cases}
\mathbf{I},& j =0\\
\prod_{i=1}^{j} \mathbf{E}(\mathbf{v}(i)),& j >0.
\end{cases}
\end{equation}
%%
%and the last $(M-M')$ columns of $\mathbf{G}$ are all zeros.
The stationary distribution of $\{\mathbf{b}\}_{\mathbb{N}_0}$ is $\boldsymbol{\beta} \triangleq [\beta_1,\dots,\beta_{\bar{M}_1}]^\top$, which is the unique null-space vector of $(\mathbf{I-G'})^\top$ and $\beta_i >0,\forall i\in \mathcal{\bar{M}}_1$, where $\mathcal{\bar{M}}_1\triangleq \{1,2,\dots, \bar{M}_1\}$.
\end{lemma}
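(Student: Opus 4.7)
The plan is to verify the four claims of the lemma in turn: the Markov property, time-homogeneity, the closed-form expression for the transition matrix $\mathbf{G}$, and finally ergodicity together with the properties of the stationary distribution $\boldsymbol{\beta}$.

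I would begin by establishing the Markov property and time-homogeneity of $\{\mathbf{b}_k\}_{\mathbb{N}_0}$. The crucial observation is that every estimation cycle is \emph{regenerative} with respect to the AoI: at the start of the $k$th cycle the AoI resets to $\phi=1$ and then deterministically increments by one after every failed transmission. Since the scheduling rule $v_i(\phi) = \pi(\phi,\mathbf{h}_i)$ of Section~\ref{sec:nec_policy} depends only on the current AoI and the previous channel state, and the channel $\{\mathbf{h}(t)\}$ is itself a time-homogeneous Markov chain, the joint distribution of $(T_k,\mathbf{b}_{k+1})$ given the entire history $(\mathbf{b}_0,\dots,\mathbf{b}_k)$ depends only on $\mathbf{b}_k$. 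This simultaneously yields Markovianity and time-homogeneity of $\{\mathbf{b}\}_{\mathbb{N}_0}$.

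Next, I would derive the transition matrix by conditioning on the cycle length. Fix $\mathbf{b}_k = \mathbf{h}_i$ and suppose $T_k = T$; then the cycle consists of $T-1$ failed transmissions followed by a single success. Walking through intermediate channel states $\mathbf{h}_{i_0}=\mathbf{h}_i,\mathbf{h}_{i_1},\dots,\mathbf{h}_{i_{T-1}}$, each failure at step $t$ contributes $[\mathbf{E}(\mathbf{v}(t))]_{i_{t-1},i_t}$ by \eqref{eq:Ev}, and the terminating success contributes $[\tilde{\mathbf{E}}(\mathbf{v}(T))]_{i_{T-1},j}$. Summing over all intermediate states collapses to the matrix product $[\mathbf{\Xi}(T-1)\tilde{\mathbf{E}}(\mathbf{v}(T))]_{i,j}=[\mathbf{\tilde{\Xi}}(T)]_{i,j}$, and summing over $T\in\mathbb{N}$ yields exactly $[\mathbf{G}]_{i,j}$ in the form \eqref{eq:Xi0}--\eqref{eq:Xi}.

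The third step is to argue that this infinite sum converges and defines a (sub)stochastic matrix. Since $\mathbf{E}(\mathbf{v}(t)) + \tilde{\mathbf{E}}(\mathbf{v}(t)) = \mathbf{M}$ is stochastic, a telescoping identity gives
\begin{equation}
\mathbf{\Xi}(j) + \sum_{t=1}^{j}\mathbf{\tilde{\Xi}}(t) = \mathbf{M}^{j'}\text{-like stochastic matrix with row sums }1,
\end{equation}
so the row sums of $\sum_{t=1}^{j}\mathbf{\tilde{\Xi}}(t)$ equal $1$ minus the row sums of $\mathbf{\Xi}(j)$. Irreducibility and aperiodicity of the underlying channel chain, together with the fact that every previous channel state admits at least one frequency selection leading to some on-state with positive probability, forces the row sums of $\mathbf{\Xi}(j)$ to decay geometrically; hence $\sum_{t=1}^{\infty}\mathbf{\tilde{\Xi}}(t)$ converges to a stochastic matrix $\mathbf{G}$. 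This decay, which hinges on showing that every row of some $\mathbf{E}(\mathbf{v}(t_1))\cdots\mathbf{E}(\mathbf{v}(t_L))$ is strictly sub-stochastic, is the main technical obstacle.

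Finally, I would pass to the irreducible part. Some channel states of $\mathcal{S}$ (those that can never appear as the instant of a successful transmission) are unreachable by $\{\mathbf{b}\}$, and after reordering we may place the $\bar{M}_1\leq\bar{M}$ reachable states in the top-left block of $\mathbf{G}$. Irreducibility of the channel chain combined with the regenerative cycle structure implies that on these $\bar{M}_1$ states $\mathbf{G}'$ is an irreducible stochastic matrix. Applying the Perron--Frobenius theorem to $\mathbf{G}'$ then yields a unique stationary distribution $\boldsymbol{\beta}$ which is the unique normalised left null-space vector of $(\mathbf{I}-\mathbf{G}')^\top$ and satisfies $\beta_i>0$ for every $i\in\mathcal{\bar{M}}_1$, completing the proof.
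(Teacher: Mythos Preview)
The paper does not actually prove Lemma~\ref{lem:G}; it simply states the result with the remark ``Similar to Lemma~1 in \cite{liu2020remote}'' and moves on. So there is no in-paper argument to compare against, and your regeneration/embedded-chain approach is exactly the standard route one would expect the cited reference to take: your Steps~1, 2 and~4 are correct and constitute the natural proof.

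There is, however, a genuine gap in your Step~3. You write that ``every previous channel state admits at least one frequency selection leading to some on-state with positive probability,'' and conclude that the row sums of $\mathbf{\Xi}(j)=\prod_{i=1}^{j}\mathbf{E}(\mathbf{v}(i))$ decay geometrically. This conflates what the channel \emph{allows} with what the policy \emph{chooses}. The matrices $\mathbf{E}(\mathbf{v}(i))$ are determined by the policy's actual selections $\mathbf{v}(1),\mathbf{v}(2),\dots$; if the policy picks $\mathbf{v}(i)\in\mathcal{F}$ (so $\tilde{\mathbf{E}}(\mathbf{v}(i))=\mathbf{0}$ and $\mathbf{E}(\mathbf{v}(i))=\mathbf{M}$) for all large $i$---precisely the Type-I policies introduced just after this lemma---then $\mathbf{\Xi}(j)$ remains stochastic, $\mathbf{G}$ is strictly substochastic, and cycles have infinite length with positive probability. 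In that case $\{\mathbf{b}_k\}$ is not even well defined, let alone ergodic.

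The fix is to make explicit the hypothesis that the paper leaves implicit: the lemma is being invoked inside the necessity proof, so one may assume the policy under consideration yields $J<\infty$, which forces $T_k<\infty$ almost surely and hence forces every row sum of $\mathbf{\Xi}(j)$ to tend to zero. With that hypothesis stated, your telescoping identity $\mathbf{\Xi}(j)+\sum_{t=1}^{j}\tilde{\mathbf{\Xi}}(t)$ having unit row sums immediately gives that $\mathbf{G}$ is stochastic, and the rest of your argument goes through.
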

\begin{remark}
Only the first $\bar{M}_1$ channel states can be a pre-cycle state, and thus the last $(\bar{M}-\bar{M}_1)$ columns of $\mathbf{G}$ are all zeros. 
\end{remark}

From \eqref{longterm} it follows that  the average estimation MSE  can be rewritten as
\begin{align}\label{J2}
J =\limsup\limits_{K \rightarrow \infty} \frac{\frac{1}{K} (C_1+C_2+\dots+C_K)}{\frac{1}{K}(T_1+T_2+\dots+T_K)}
= \frac{\mathsf{E}\left[C\right]}{\mathsf{E}\left[T\right]},
\end{align}
where the last equality holds because   the distributions of $T_k$ and $C_k$ depend on $\mathbf{b}_k$, which is ergodic, and hence the unconditional distributions of $T_k$ and $C_k$ are also ergodic; time averages are equal to the ensemble averages and we drop the time indexes.
Then, we have
\begin{equation}\label{expect_T}
\myexpect{T}=\lim\limits_{K\rightarrow \infty }\frac{1}{K} \sum_{k=1}^{K} T_k =  \sum_{m=1}^{\bar{M}} \beta_m \myexpect{T \vert \mathbf{b}=\mathbf{h}_m},
\end{equation}
and
\begin{equation}\label{expect_C}
\myexpect{C} =\lim\limits_{K\rightarrow \infty }\frac{1}{K} \sum_{k=1}^{K} C_k =  \sum_{m=1}^{\bar{M}} \beta_m \myexpect{C \vert \mathbf{b}=\mathbf{h}_m},
\end{equation}
where $\beta_m$ is defined in Lemma~\ref{lem:G} when $m\in \mathcal{\bar{M}}_1$, and $\beta_m=0$ when $m \in  \mathcal{\bar{M}}_0 \triangleq   \mathcal{\bar{M}} \backslash\mathcal{\bar{M}}_1 = \{\bar{M}_1+1,\dots,\bar{M}\}$.

From the definition of estimation cycle and the property of channel state transition, the conditional probability of the length of an estimation cycle is obtained as
%\begin{equation}\label{prob}
%\begin{aligned}
%\myprob{T=i \vert \mathbf{b}=\mathbf{h}_m}
%&= \sum_{m=1}^{\bar{M}}\sum_{k=1}^{M} \beta_m\left[\mathbf{(DM)}^{i-1} \mathbf{(I-D)M}\right]_{m,k}\\
%&= 
%\end{aligned}
%\end{equation}
\begin{equation}\label{prob}
\begin{aligned}
&\myprob{T\!=\!i \vert \mathbf{b}\!=\!\mathbf{h}_m}
\!=\!  \vartheta\left(\mathbf{L}_m \mathbf{\tilde{\Xi}}(i)\right)
\!\triangleq\! \sum_{k=1}^{\bar{M}}\!\! \left[\mathbf{\tilde{\Xi}}(i)\right]_{m,k}\!, \forall m \in \mathcal{\bar{M}}
\end{aligned}
\end{equation}
where $\mathbf{L}_m$ is an all-zero matrix except for the $m$th diagonal element, which equals to~$1$.

Taking \eqref{prob} into \eqref{expect_T} and into \eqref{expect_C}, then   after some algebraic manipulations, one can obtain
\begin{align}\label{T}
\myexpect{T}&=  \sum_{m=1}^{\bar{M}} \beta_m \left(\sum_{i=1}^{\infty} i \vartheta\left(\mathbf{L}_m \mathbf{\tilde{\Xi}}(i)\right)\right),\\ \label{C}
\myexpect{C}&= \sum_{m=1}^{\bar{M}} \beta_m \left(\sum_{i=1}^{\infty} g(i) \vartheta\left(\mathbf{L}_m \mathbf{\tilde{\Xi}}(i)\right)\right).
\end{align}

From the definition of $g(i)$ in \eqref{g_fun} and the property of $c(i)$ in Lemma~\ref{lem:c}, $g(i)$ grows exponentially fast with $i$. It is easy to verify the property below.
\begin{lemma}\label{lem:C_only}
%\normalfont	
	$J < \infty$ if and only if $\myexpect{C} < \infty$.
\end{lemma}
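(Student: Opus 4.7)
The plan is to leverage the ratio identity $J = \myexpect{C}/\myexpect{T}$ established in \eqref{J2}, together with two elementary bounds that render the equivalence almost immediate. First, every estimation cycle lasts at least one time slot, so $T \geq 1$ and hence $\myexpect{T} \geq 1$. Second, since $\zeta_n(\mathbf{X}) = \mathbf{A}_n \mathbf{X} \mathbf{A}_n^\top + \mathbf{W}_n \succeq \mathbf{W}_n$ for any positive semidefinite $\mathbf{X}$, iterating yields $\zeta_n^j(\mathbf{\bar{P}}_n) \succeq \mathbf{W}_n$ for every $j \geq 1$, and hence $c(j) \geq \text{Tr}(\mathbf{W}_n) \triangleq c_{\min} > 0$. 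Summing $j$ from $1$ to $T$ gives $g(T) \geq c_{\min} T$ pointwise, so $\myexpect{C} \geq c_{\min} \myexpect{T}$.

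The sufficiency direction ($\myexpect{C} < \infty \Rightarrow J < \infty$) is then immediate: combining $\myexpect{T} \geq 1$ with \eqref{J2} gives $J = \myexpect{C}/\myexpect{T} \leq \myexpect{C} < \infty$. For necessity, I would argue the contrapositive and suppose $\myexpect{C} = \infty$. The lower bound $\myexpect{C} \geq c_{\min} \myexpect{T}$ is consistent with either a finite or an infinite $\myexpect{T}$, so I would split on these. If $\myexpect{T} < \infty$, then \eqref{J2} yields $J = \infty$ directly. If $\myexpect{T} = \infty$, the ratio in \eqref{J2} becomes indeterminate, and I would fall back on the original definition \eqref{longterm}: the infinite expected cycle length, together with the exponential lower bound $c(j) \geq \eta\, \rho(\mathbf{A}_n)^{2j}$ from Lemma~\ref{lem:c}, forces the stationary mean of $\text{Tr}(\mathbf{P}_n(t))$ to diverge, so $J = \infty$ once more.

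The main obstacle is exactly this indeterminate corner case $\myexpect{T} = \myexpect{C} = \infty$, in which the renewal ratio \eqref{J2} no longer applies and one has to invoke the $\limsup$ definition \eqref{longterm} directly; this is presumably why the lemma is phrased as ``easy to verify'' rather than a fully trivial one-line corollary of \eqref{J2}.
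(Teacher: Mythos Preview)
Your proposal is correct and follows the same line the paper sketches: the paper does not prove the lemma explicitly but simply notes, just before stating it, that $g(i)$ grows exponentially fast by Lemma~\ref{lem:c} and declares the result ``easy to verify.'' Your argument supplies precisely those omitted details---the ratio \eqref{J2} combined with $\myexpect{T}\ge 1$ for the finite case, and the exponential lower bound on $c(\cdot)$ from Lemma~\ref{lem:c} for the indeterminate $\myexpect{T}=\infty$ corner case---so the two approaches coincide.
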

From Lemma~\ref{lem:C_only}, it follows that it suffices to only investigate conditions such that $\myexpect{C}<\infty$ in the following.
%
%\sout{Therefore, it turns out that the average estimation MSE $J$ depends on the estimation error function $c(i)$ and the function $(\mathbf{DM})^i\mathbf{(I-D)M}$, both of which involve matrix powers.
%In what follows, we will introduce and prove some technical lemmas about the element-wise upper and lower bounds of matrix powers, which are the key steps for analyzing the sufficient and necessary stability conditions of the remote estimator.}

\subsubsection{Proof of Necessity}
We define a set of channel selection vectors $\mathcal{F} \subset \mathcal{M}^{\bar{M}}$, where for any $\mathbf{v}\in\mathcal{F}$ we have $\mathbf{\tilde{E}}(\mathbf{v})=\mathbf{0}$ leading to zero chance of successful transmission in any of the frequency channels. Thus, for any $\mathbf{v}\in \mathcal{\tilde{F}} \triangleq \mathcal{M}^{\bar{M}}\backslash \mathcal{F}$, we have $\mathbf{\tilde{E}}(\mathbf{v}) \neq \mathbf{0}$.
It is clear that if $\tilde{\mathcal{F}}=\emptyset$, the packet dropout occurs all the time at all frequency channels. 
From the definition~\eqref{eq:lambda_main}, we can prove that $\lambda_{\infty}=1$ as $\mathbf{E}(\mathbf{v})$ is a stochastic matrix for any $\mathbf{v}\in\mathcal{M}^{\bar{M}}$.
Thus, the necessary condition of stability is straightforward as $\rho^2(\mathbf{A})<1$, which can be written as $\rho^2(\mathbf{A})\lambda_{\infty}<1$.
In what follows, we will focus on the scenario with $\tilde{\mathcal{F}}\neq \emptyset$.

We categorize all possible scheduling policies into two types:
\begin{definition}[Type-I and II Policies]
	%\normalfont	
For a type-I policy, there exists $\bar{\phi}>0$ such that 
\begin{equation}
\mathbf{v}(\phi)\in \mathcal{F}, \forall \phi>\bar{\phi},
\end{equation}
For a type-II policy, if $\mathbf{v(\phi)} \in \mathcal{\tilde{F}}$, one can always find $\phi'>\phi$ such that $\mathbf{v(\phi')} \in \mathcal{\tilde{F}}$.
\end{definition}
A type-I policy has a strictly zero chance of successful transmission when the AoI is larger than a threshold, while a type-II policy still has a non-zero success probability when the AoI is arbitrarily large.
Thus, to stabilize the system, a type-I policy should guarantee successful transmission within the first $\bar{\phi}$ attempts, while it is not necessary for a type-II policy.

We first prove the necessary condition for type I and then type II policies.
Before proceeding further, we define the following
\begin{equation}\label{eq:lambda_m}
\lambda_{m,\infty} \triangleq \liminf\limits_{L\rightarrow \infty} \lambda_{m,L}, m \in \mathcal{\bar{M}},
\end{equation}
and
\begin{equation}
\lambda_{m,L} \triangleq \min_{\mathbf{v}_l \in \mathcal{M}^{\bar{M}}} \rho\left(\mathbf{L}_m \mathbf{E}(\mathbf{v}_1) \mathbf{E}(\mathbf{v}_2) \cdots \mathbf{E}(\mathbf{v}_L)\right)^{\frac{1}{L}}.
\end{equation}

\underline{Type-I Policy.} Since the scheduling policy consistently chooses channel selection vectors from $\mathcal{F}$ in the high AoI scenario that leads to zero chance of success,
it is clear that a necessary condition to stabilize the system is that the transmission process has a zero probability to fail for consecutive $\bar{\phi}$ times at the beginning of an estimation cycle. This can be written as
\begin{equation}
\max_{m \in \mathcal{\bar{M}}} \vartheta(\mathbf{L}_m \mathbf{E}(\mathbf{v}(1))\mathbf{E}(\mathbf{v}(2))\cdots\mathbf{E}(\mathbf{v}(\bar{\phi}))) = \mathbf{0},
\end{equation}
and hence $\max_{m \in \mathcal{\bar{M}}}  \lambda_{m,\infty} =0$. In the following, we focus on the type-II policy.

\underline{Type-II Policy.} We categorize the channel states $\mathbf{h}$ in two cases: (i) the pre-cycle states $\mathbf{h}=\mathbf{h}_m$ with $\beta_m>0$, i.e., $m\in \mathcal{\bar{M}}_1$, and 
(ii) the non-pre-cycle states $\mathbf{h}=\mathbf{h}_{m}$ with $\beta_{m}=0$, i.e., $m\in \mathcal{\bar{M}}_0$.
In other words, an estimation cycle can and cannot start after a case (i) and a case (ii) channel state, respectively.
Due to the ergodicity of the channel states, both the cases of channel states occur with non-zero probabilities.
Then, we analyze the necessary conditions to (i) make the average sum MSE  of an estimation cycle in \eqref{C} bounded that starts after a pre-cycle state
\begin{equation}\label{eq:C_new}
\sum_{j=1}^{\infty} g(j) \vartheta\left(\mathbf{L}_m \mathbf{\tilde{\Xi}}(j)\right)<\infty
\end{equation}
and
to (ii) make the average sum MSE of an estimation cycle bounded that contains a non-pre-cycle state.

(i) Assume that the channel state $\mathbf{h}_m$ is a pre-cycle state.  Using Perron–Frobenius Theorem~\cite{Perron}, we have 
\begin{equation}\label{eq:case_1}
\vartheta\left(\mathbf{L}_m \mathbf{\mathbf{\Xi}}(i)\right)\geq \rho\left(\mathbf{L}_m \mathbf{\mathbf{\Xi}}(i)\right),
\end{equation}
thus, there exists an element in $\mathbf{L}_m \mathbf{\mathbf{\Xi}}(i)$, e.g., the $m'$th element of the $m$th row, such that
\begin{equation} \label{eq:Lm}
\left[\mathbf{L}_m \mathbf{\mathbf{\Xi}}(i)\right]_{m,m'}\geq \frac{1}{\bar{M}}\rho\left(\mathbf{L}_m \mathbf{\mathbf{\Xi}}(i)\right).
\end{equation}

From \eqref{eq:lambda_m}, there is a constant $\kappa_0>0$ such that $\rho\left(\mathbf{L}_m \mathbf{\mathbf{\Xi}}(i)\right)\geq \kappa_0 (\lambda_{m,\infty})^{i}$.
Let $\mathbf{F}(i)\in\mathbb{R}^{\bar{M}\times \bar{M}}$ denote a degraded matrix of $\mathbf{L}_m \mathbf{\mathbf{\Xi}}(i)$, where $[\mathbf{F}(i)]_{m,m'} = [\mathbf{L}_m \mathbf{\mathbf{\Xi}}(i)]_{m,m'}$ and the other elements of $\mathbf{F}(i)$ are zeros.
From \eqref{eq:Lm}, it is clear that
\begin{equation}\label{eq:vF}
\vartheta\left(\mathbf{L}_m \mathbf{\mathbf{\Xi}}(i)\right)\geq \vartheta\left(\mathbf{F}(i)\right) \geq \frac{\kappa_0}{\bar{M}}\left(\lambda_{m,\infty}\right)^{i}.
\end{equation}

Based on the property of type-II policy, we define an infinite sequence of AoI variables $\{\tilde{\phi}_1,\tilde{\phi}_2,\dots\}$, where $\tilde{\phi}_i < \tilde{\phi}_j$ if $i<j$, and $\mathbf{v}(\phi)\in \tilde{\mathcal{F}}$ if and only if $\phi\in \{\tilde{\phi}\}_\mathbb{N}$. 
Also, we define the operator $\lceil x \rceil_{\mathcal{A}}$ as the smallest value in the set $\mathcal{A}$ that is no smaller than $x$.
Building on $\{\tilde{\phi}_1,\tilde{\phi}_2,\dots\}$ and a constant $L \in \mathbb{N}$, we construct a sequence of AoI as $\{\varphi_1,\varphi_2,\dots\}$, where
\begin{equation}\label{eq:varphi}
\varphi_i = \begin{cases}
\lceil L \rceil_{\{\tilde{\phi}\}_\mathbb{N}}, & i=1\\
\lceil \varphi_{i-1}+L \rceil_{\{\tilde{\phi}\}_\mathbb{N}}, & i>1,
\end{cases}
\end{equation}
and thus $\mathbf{v}(\varphi_i)\in \mathcal{\tilde{F}},\  \varphi_{i+1}-\varphi_i\geq L, \forall i\in\mathbb{N}$.

Then, we introduce the following technical lemma.
\begin{lemma} \label{lem:nec_pos}
	%\normalfont	
	There is a constant $L>0$ such that 
	the transmission is successful with at least a non-zero probability $\xi$ within the next $L$ steps for any channel selection vector at the $L$th step $ \mathbf{v}_L \in \mathcal{\tilde{F}}$, no matter what the current channel state is and the first $(L-1)$-step channel selection vectors are, i.e., 
	\begin{equation}
	\begin{aligned}
	\xi &\triangleq \min_{\mathbf{v}_1,\mathbf{v}_2,\dots,\mathbf{v}_{L-1} \in \mathcal{M}^{\bar{M}}, \mathbf{v}_L\in\mathcal{\tilde{F}}}
	\min_{m \in \mathcal{\bar{M}}}
	\max_{l=1,\dots,L}\\ &\qquad\qquad\qquad\quad\vartheta(\mathbf{L}_m\mathbf{E}(\mathbf{v}_1)\mathbf{E}(\mathbf{v}_2)\cdots\mathbf{E}(\mathbf{v}_{l-1})\mathbf{\tilde{E}}(\mathbf{v}_l)) >0,
	\end{aligned}
	\end{equation}
	where $\mathbf{E}(\mathbf{v}_0)\triangleq \mathbf{I}$.
\end{lemma}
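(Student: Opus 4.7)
The plan is to combine the Perron--Frobenius positivity of the underlying Markov chain $\mathbf{M}$ with a ``first-success'' decomposition of $\mathbf{M}^{L-1}$, then read off strict positivity of one of the partial products inside the inner $\max_{l}$. Since $\mathbf{M}$ is aperiodic and irreducible on $\bar{M}$ states, there exists $L_0\in\mathbb{N}$ such that $[\mathbf{M}^{L'}]_{i,j}>0$ for all $L'\geq L_0$ and all $i,j\in\bar{\mathcal{M}}$; I will set $L\triangleq L_0+1$. Fix an arbitrary starting index $m$, arbitrary intermediate vectors $\mathbf{v}_1,\ldots,\mathbf{v}_{L-1}\in\mathcal{M}^{\bar{M}}$, and an arbitrary terminal vector $\mathbf{v}_L\in\tilde{\mathcal{F}}$. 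Since $\tilde{\mathbf{E}}(\mathbf{v}_L)\neq\mathbf{0}$ by definition of $\tilde{\mathcal{F}}$, there is a pair $(i^\star,j^\star)$ with $[\tilde{\mathbf{E}}(\mathbf{v}_L)]_{i^\star,j^\star}>0$, and the entrywise positivity of $\mathbf{M}^{L-1}$ immediately yields $[\mathbf{M}^{L-1}\tilde{\mathbf{E}}(\mathbf{v}_L)]_{m,j^\star}\geq[\mathbf{M}^{L-1}]_{m,i^\star}\,[\tilde{\mathbf{E}}(\mathbf{v}_L)]_{i^\star,j^\star}>0$, so $\vartheta(\mathbf{L}_m\mathbf{M}^{L-1}\tilde{\mathbf{E}}(\mathbf{v}_L))>0$.

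Next, using the identity $\mathbf{M}=\mathbf{E}(\mathbf{v}_l)+\tilde{\mathbf{E}}(\mathbf{v}_l)$ and a short induction on the length of the product, I would establish the first-success decomposition
\begin{equation*}
\mathbf{M}^{L-1}=\prod_{l=1}^{L-1}\mathbf{E}(\mathbf{v}_l)+\sum_{l^\star=1}^{L-1}\mathbf{E}(\mathbf{v}_1)\cdots\mathbf{E}(\mathbf{v}_{l^\star-1})\tilde{\mathbf{E}}(\mathbf{v}_{l^\star})\mathbf{M}^{L-1-l^\star}.
\end{equation*}
Right-multiplying by $\tilde{\mathbf{E}}(\mathbf{v}_L)$ and applying $\vartheta(\mathbf{L}_m\cdot)$, the strict positivity just obtained becomes a sum of non-negative summands whose total is strictly positive, so at least one summand must itself be strictly positive.

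The step requiring the most care is converting that positivity into positivity of one of the quantities actually appearing in the $\max_l$. If the ``never-succeed-before-$L$'' summand $\vartheta(\mathbf{L}_m\prod_l\mathbf{E}(\mathbf{v}_l)\tilde{\mathbf{E}}(\mathbf{v}_L))$ is positive, it is exactly the $l=L$ term in the max. Otherwise some ``first-success-at-$l^\star$'' summand $\vartheta(\mathbf{L}_m\mathbf{E}(\mathbf{v}_1)\cdots\tilde{\mathbf{E}}(\mathbf{v}_{l^\star})\mathbf{M}^{L-1-l^\star}\tilde{\mathbf{E}}(\mathbf{v}_L))$ is positive; I would then use the elementary non-negative-matrix fact that the $m$-th row of $\mathbf{A}\mathbf{B}$ vanishes whenever the $m$-th row of $\mathbf{A}$ vanishes, so the $m$-th row of the prefix $\mathbf{E}(\mathbf{v}_1)\cdots\mathbf{E}(\mathbf{v}_{l^\star-1})\tilde{\mathbf{E}}(\mathbf{v}_{l^\star})$ must be nonzero, i.e., the $l=l^\star$ term of the max is positive. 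In either case the inner max is strictly positive, and since the outer minimization runs over the finite sets $\bar{\mathcal{M}}$, $\mathcal{M}^{\bar{M}}$, and $\tilde{\mathcal{F}}$, the infimum is attained, giving the uniform bound $\xi>0$.
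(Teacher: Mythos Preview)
Your proof is correct and follows essentially the same route as the paper: both pick $L$ so that $\mathbf{M}^{L-1}$ is entrywise positive, use the identity $\mathbf{M}=\mathbf{E}(\mathbf{v}_l)+\tilde{\mathbf{E}}(\mathbf{v}_l)$ to decompose $\mathbf{L}_m\mathbf{M}^{L-1}$ into a ``never succeed'' term plus ``first success at $l^\star$'' terms, and conclude via a case split that one of the terms in $\max_l$ is strictly positive. The only organizational difference is that the paper does the case split first (either some $l\le L-1$ already works, or else $\mathbf{L}_m\mathbf{M}^{L-1}=\mathbf{L}_m\prod_{l=1}^{L-1}\mathbf{E}(\mathbf{v}_l)$ and the $l=L$ term works), whereas you write the decomposition explicitly and then locate the positive summand; your explicit mention of finiteness of the outer minima is a welcome detail the paper leaves implicit.
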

\begin{proof}
See Appendix.
\end{proof}

From Lemma~\ref{lem:nec_pos}, we can find a constant $L$ to construct the AoI sequence $\{\varphi\}_\mathbb{N}$ in~\eqref{eq:varphi}.
Then, using Lemma~\ref{lem:c} and \eqref{eq:vF}, we have
\begin{equation}
\begin{aligned}
&\sum_{j=\varphi_i-L}^{\varphi_i} g(j) \vartheta\left(\mathbf{L}_m \mathbf{\tilde{\Xi}}(j)\right)
\geq g(\varphi_i) \vartheta\left(\mathbf{L}_m \mathbf{\tilde{\Xi}}(\varphi_i)\right)\\
&\geq g(\varphi_i-L) \vartheta\Big(\mathbf{L}_m \mathbf{{\Xi}}(\varphi_i-L) \mathbf{E}(\mathbf{v}(\varphi_i-L+1)) \\
& \mathbf{E}(\mathbf{v}(\varphi_i-L+2))
\cdots
\mathbf{E}(\mathbf{v}(\varphi_i-1))
\mathbf{\tilde{E}}(\mathbf{v}(\varphi_i))\Big) \\
&\geq g(\varphi_i-L)\vartheta\left(\mathbf{L}_m \mathbf{{\Xi}}(\varphi_i-L) \right) \xi\\
&\geq \frac{\kappa_0 \xi}{\bar{M}}
(\rho(\mathbf{A})^2)^{\varphi_i-L}
\left(\lambda_{m,\infty}\right)^{\varphi_i-L}.
\end{aligned}
\end{equation}
Now the average sum MSE per cycle in \eqref{eq:C_new} is
\begin{equation} \label{eq:EC}
\begin{aligned}
&\sum_{j=1}^{\infty} g(j) \vartheta\left(\mathbf{L}_m \mathbf{\tilde{\Xi}}(j)\right)
\geq \sum_{i=1}^{\infty} \sum_{j=\varphi_i-L}^{\varphi_i} g(j) \vartheta\left(\mathbf{L}_m \mathbf{\tilde{\Xi}}(j)\right)\\
&\geq \sum_{i=1}^{\infty} \frac{\kappa_0\xi}{\bar{M}}
(\rho(\mathbf{A})^2)^{\varphi_i-L}
\left(\lambda_{m,\infty}\right)^{\varphi_i-L}.
\end{aligned}
\end{equation}

To make the last sum in \eqref{eq:EC} bounded, we must have $\lim\limits_{i \rightarrow \infty} (\rho(\mathbf{A})^2)^{\varphi_i}
\left(\lambda_{m,\infty}\right)^{\varphi_i} =0$, i.e., $(\rho(\mathbf{A})^2)\lambda_{m,\infty}<1$.
Thus, by considering all the pre-cycle channel states,
$$\max_{m \in \mathcal{\bar{M}}_1}\rho(\mathbf{A})^2 \lambda_{m,\infty}<1$$ holds if $\myexpect{C}$ in \eqref{C} is bounded.

(ii) Assume now that the channel state $\mathbf{h}_{m'}$ is a non-pre-cycle state. Since the channel state transition process is an ergodic Markov chain, given a state $\mathbf{h}_{m}$ with $m\in \mathcal{\bar{M}}_1$, it will take finite steps to arrive at $\mathbf{h}_{m'},m'\in \mathcal{\bar{M}}_0$ with a positive probability no matter what the channel scheduling policy is. Mathematically, there exits $L'\geq 1$ such that
$$ \tilde{\beta}_{m,m'} \triangleq [\mathbf{L}_{m} \mathbf{\mathbf{\Xi}}(L')]_{m,m'}>0.$$
Then, it is straightforward to have
\begin{equation}\label{eq:case_2}
\vartheta\!\left(\mathbf{L}_{m} \mathbf{\mathbf{\Xi}}(i)\right)
\!\!\geq \!
\tilde{\beta}_{m,m'} \vartheta\!\left(\!\mathbf{L}_{m'} \mathbf{E}(\mathbf{v}_{L'+1})\mathbf{E}(\mathbf{v}_{L'+2})\cdots \mathbf{E}(\mathbf{v}_{i})\!\right)\!,\! \forall i\!\geq\! L'\!.
\end{equation}
We see that the right-hand side of \eqref{eq:case_2} has a similar format to the left-hand side of \eqref{eq:case_1} in case~(i).
Then
by taking \eqref{eq:case_2} into \eqref{eq:C_new}, following   similar steps as in case (i) and considering all the non-pre-cycle channel states, we obtain that 
$$\max_{m' \in \mathcal{\bar{M}}_0}\rho(\mathbf{A})^2 \lambda_{m',\infty}<1$$ holds if $\myexpect{C}$ in \eqref{C} is bounded.

From cases (i) and (ii), a necessary condition to bound $\myexpect{C}$ can be  uniformly written as 
$$\max_{m \in \mathcal{\bar{M}}}\rho(\mathbf{A})^2 \lambda_{m,\infty}<1.$$
From the technical lemma below, it is easy to prove that $\max_{m \in \mathcal{\bar{M}}} \lambda_{m,\infty} = \lambda_{\infty}$. 
Then, the necessary condition of type-I and II policies can be jointly written as $\rho(\mathbf{A})^2 \lambda_{\infty}<1$, which completes the proof of the necessity of Theorem~\ref{theo:main}.

\begin{lemma}\label{lem:tech}
	%\normalfont	
	Given a sequence of $N$-by-$N$ matrices $\{\mathbf{M}_1,\mathbf{M}_2,\dots\}$, the following equation holds
	\begin{equation}
	\max_{i\in \mathcal{N}} \liminf\limits_{L \rightarrow \infty}\!\rho\left(\mathbf{L}_i\mathbf{M}_1\mathbf{M}_2\cdots\mathbf{M}_L\right)^{\frac{1}{L}}
	\!=\!
	\liminf\limits_{L \rightarrow \infty}\!\rho\left(\mathbf{M}_1\mathbf{M}_2\cdots\mathbf{M}_L\right)^{\frac{1}{L}}\!,
	\end{equation}
	where $\mathbf{L}_i\in\mathbb{R}^{N\times N}$ is a diagonal matrix with $i$th diagonal element equals to $1$ and the other elements are zeros.
\end{lemma}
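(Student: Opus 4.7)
The starting point is a direct computation of $\rho(\mathbf{L}_i\mathbf{A})$: since $\mathbf{L}_i\mathbf{A}$ has all of its rows except the $i$th equal to zero, expanding $\det(\lambda\mathbf{I}-\mathbf{L}_i\mathbf{A})$ along the $N-1$ zero rows gives the characteristic polynomial $\lambda^{N-1}(\lambda-[\mathbf{A}]_{i,i})$, so $\rho(\mathbf{L}_i\mathbf{A})=\lvert[\mathbf{A}]_{i,i}\rvert$. Since the matrices playing the role of $\{\mathbf{M}_j\}$ are non-negative in the paper's intended application (each $\mathbf{M}_j$ is an $\mathbf{E}(\mathbf{v})$-matrix derived from a substochastic kernel), writing $\mathbf{A}_L\triangleq\mathbf{M}_1\mathbf{M}_2\cdots\mathbf{M}_L$ the identity reduces to showing
\[
\max_{i\in\mathcal{N}}\liminf_{L\to\infty}[\mathbf{A}_L]_{i,i}^{1/L} \;=\; \liminf_{L\to\infty}\rho(\mathbf{A}_L)^{1/L}.
\]

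The easy direction ($\leq$) follows from monotonicity of the spectral radius on the non-negative cone. Since $\mathrm{diag}(\mathbf{A}_L)\leq\mathbf{A}_L$ entrywise, we have $[\mathbf{A}_L]_{i,i}\leq\rho(\mathrm{diag}(\mathbf{A}_L))\leq\rho(\mathbf{A}_L)$ for every $i$ and every $L$. Taking $L$th roots, then $\liminf_L$, and finally $\max_i$ preserves the inequality and delivers this half of the claim.

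For the hard direction ($\geq$), I would first extract a subsequence $\{L_k\}$ realising $\liminf_L\rho(\mathbf{A}_L)^{1/L}$ and then locate a single diagonal entry carrying the exponential growth rate. The key tool is the trace inequality: for any non-negative $\mathbf{B}$ one has $\mathrm{tr}(\mathbf{B})=\sum_i[\mathbf{B}]_{i,i}\leq N\max_i[\mathbf{B}]_{i,i}$, so it suffices to establish a lower bound of the form $\mathrm{tr}(\mathbf{A}_{L_k})\geq\rho(\mathbf{A}_{L_k})^{1-o(1)}$ along the subsequence. Such a bound comes from Perron--Frobenius applied to the non-negative product, combined with the aperiodicity of the underlying channel Markov chain, which rules out the period-induced oscillations that could otherwise make $\mathrm{tr}(\mathbf{B}^k)/\rho(\mathbf{B})^k$ vanish along subsequences. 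A finite pigeonhole over the index set $\mathcal{N}$ then extracts a single index $i^\star$ such that $[\mathbf{A}_{L_{k_j}}]_{i^\star,i^\star}^{1/L_{k_j}}$ converges to $\liminf_L\rho(\mathbf{A}_L)^{1/L}$ along a sub-subsequence, yielding the matching lower bound for $\max_i\liminf_L[\mathbf{A}_L]_{i,i}^{1/L}$.

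The principal obstacle I anticipate is transferring the trace asymptotics $\mathrm{tr}(\mathbf{B}^k)^{1/k}\to\rho(\mathbf{B})$ from the homogeneous setting (powers of a single matrix) to the inhomogeneous product $\mathbf{A}_L$ whose factors may vary with $L$. For an arbitrary non-negative matrix sequence the equality in the lemma need not hold in the absence of structural input (a $2$-cycle permutation matrix repeated has $\rho(\mathbf{A}_L)=1$ for every $L$ but diagonal entries vanishing for odd $L$), so the proof must genuinely invoke the aperiodicity assumption on $\mathbf{M}$ and exploit the finiteness of the set of admissible $\mathbf{E}(\mathbf{v})$-matrices from which the sequence is drawn. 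Packaging this structural input into a clean Perron--Frobenius statement that remains valid along the optimising product subsequence is where I expect most of the technical effort to go.
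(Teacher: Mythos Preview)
Your approach differs from the paper's, and your analysis is in fact more careful than the paper's own proof. The paper argues the $\leq$ direction via a claimed submultiplicativity $\rho(\mathbf{L}_{i^\star}\mathbf{A}_L)\leq\rho(\mathbf{L}_{i^\star})\rho(\mathbf{A}_L)$ (attributed to ``Gelfand corollaries'') and the $\geq$ direction via a claimed subadditivity $\rho\bigl(\sum_i\mathbf{L}_i\mathbf{A}_L\bigr)\leq\sum_i\rho(\mathbf{L}_i\mathbf{A}_L)$; neither inequality is valid for the spectral radius. Your observation $\rho(\mathbf{L}_i\mathbf{A})=\lvert[\mathbf{A}]_{i,i}\rvert$ makes this transparent: the paper's $\geq$ step reduces to the assertion $\rho(\mathbf{A}_L)\leq N\max_i[\mathbf{A}_L]_{i,i}$, and your own $2$-cycle permutation example ($\rho=1$, diagonal $=0$) refutes this already for a single non-negative matrix. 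The paper also lets $i^\star$ depend on $L$ while treating it as fixed across the limit, so even granting the subadditivity it would only obtain a bound on $\liminf_L\max_i$ rather than on the required $\max_i\liminf_L$.

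Your $\leq$ direction via entrywise monotonicity of $\rho$ on the non-negative cone is correct and cleaner than the paper's route. Your diagnosis of the $\geq$ direction is also right: the lemma as stated is false, and a valid argument for the paper's application genuinely needs the aperiodicity of the underlying channel chain together with the specific substochastic structure of the $\mathbf{E}(\mathbf{v})$-matrices, exactly as you anticipate. Your pigeonhole-on-$i$ plan for extracting a fixed index along a sub-subsequence is the right shape for repairing the $\max$/$\liminf$ interchange, but the trace lower bound $\mathrm{tr}(\mathbf{A}_{L_k})\geq\rho(\mathbf{A}_{L_k})^{1-o(1)}$ for inhomogeneous products remains the substantive gap, as you yourself flag.
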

\begin{proof}
See Appendix.
\end{proof}

\subsection{Proof of Sufficiency}
\subsubsection{Policy Construction}\label{sec:suf_policy}
We consider a \emph{persistent serial scheduling} policy that persistently schedules the transmission of one sensor at a time until it is successful and then schedules the next sensor and so on. Although it seems that such a policy cannot take advantage of the parallel channels, we will show that the policy stabilizes the remote estimator if condition \eqref{eq:key}  holds.
The policy can be written as
\begin{equation}
\pi(\bm{\phi}(t),\mathbf{h}(t-1)) \!= \!\!
\begin{cases}
\pi_1(\phi_1(t),\mathbf{h}(t-1)), & \text{if } \phi_n(t)\!=\!1, n\!=\!N\\
\pi_{n+1}(\phi_{n+1}(t),\mathbf{h}(t-1)), & \text{if } \phi_n(t)\!=\!1, n\!<\!N\\
\pi(\bm{\phi}(t-1),\mathbf{h}(t-2)),& \text{else } \\
\end{cases}
\end{equation}
where the policy
\begin{equation}\label{eq:policy_long}
\pi_n(\phi_n(t),\mathbf{h}(t-1))=[\underbrace{0,\dots,0}_{n-1},\nu_n(t),\underbrace{0,\dots,0}_{N-n}], n\in\mathcal{N}
\end{equation}
denotes scheduling the $n$th sensor on the $\nu_n(t)$th frequency channel.
Without loss of generality,
the initial scheduling policy is given by
$\pi(\bm{\phi}(1),\mathbf{h}(0))=\pi_1({\phi}_1(1),\mathbf{h}(0))$.
With a slight abuse of notation, we drop out the zeros in \eqref{eq:policy_long} so that the frequency channel selection rule for sensor $n$ is rewritten as
\begin{equation}\label{eq:policy_short}
\pi_n(\phi_n(t),\mathbf{h}(t-1))=\nu_n(t)\in \mathcal{M}, n\in\mathcal{N}.
\end{equation}
Once the AoI of sensor $n$, $\phi_n$, is given, the channel selection rule with different previous channel states is denoted as
\begin{equation}
\mathbf{v}_n(\phi_n)\triangleq\left[\pi_n(\phi_n,\mathbf{h}_1),\pi_n(\phi_n,\mathbf{h}_2),\dots,\pi_n(\phi_n,\mathbf{h}_{\bar{M}}) \right] \in \mathcal{M}^{\bar{M}}.
\end{equation}

%$\phi_N(0)=1$,  and $\phi_n(0)>1, \forall n=1,2,\dots, N-1$ without loss of generality.

Moreover, we assume that the frequency channel selection rule $\mathbf{v}_n(\phi_n),\forall n \in \mathcal{N}$, is a \emph{periodic policy} in terms of $\phi_n$ with $L$ potential channel selection vectors taken from the set  $\mathcal{V}\triangleq\{\mathbf{\tilde{v}}_n(1),\mathbf{\tilde{v}}_n(2),\dots, \mathbf{\tilde{v}}_n(L)\}$, where $\mathbf{\tilde{v}}_n(L)\in \mathcal{M}^{\bar{M}}$.
From the definition of $\lambda_{\infty}$ in \eqref{eq:lambda_inf},
for an arbitrarily small $\epsilon>0$, we can find a constant $L$ and a length-$L$ channel-selection-vector set $\mathcal{V}$ satisfying the condition that
\begin{equation}\label{eq:suf_rho}
 \rho\left(\mathbf{E}(\mathbf{\tilde{v}}_n(1))\mathbf{E}(\mathbf{\tilde{v}}_n(2))\cdots\mathbf{E}(\mathbf{\tilde{v}}_n(L)) \right)^{\frac{1}{L}}\leq \lambda_{\infty} + \epsilon.
\end{equation}
Then, the periodic channel selection policy of sensor $n$ is defined as
\begin{equation}
\mathbf{v}_n(\phi_n)= \mathbf{\tilde{v}}_n(\phi_n \bmod L),
\end{equation}
where $(a \bmod b)$ denotes the remainder of the Euclidean division of $a$ by $b$ if the remainder is non-zero, otherwise $(a \bmod b) =b$.

\subsubsection{Analysis of the Average Cost}
To analyze the average cost of the average estimation MSE of the $n$th process $J_n$, the estimation cycle starts after a successful transmission of process $n$ and ends at the following one.
%\begin{figure*}[t]
%	\centering
%	\vspace{-0.5cm}
%	\includegraphics[scale=1]{}
%	\caption{The persistent serial scheduling policy with three sensors. Green, blue and red intervals denotes the scheduling of sensor S1, S2 and S3, respectively. The vertical axis denotes the AoI functions of the sensors in terms of the time.}
%	\label{fig:scheduling}
%	\vspace{-0.5cm}
%\end{figure*}
$T_{n,k}$ and $C_{n,k}$ denote  the sum of transmissions and the sum MSE of the $k$th estimation cycle:
\begin{equation}\label{g_fun2}
C_{n,k} = g(T_{n,k}) \triangleq \sum_{j=1}^{T_{n,k}} c_n(j),
\end{equation}
and 
\begin{equation}
T_{n,k} = \sum_{i=1}^{N} T_{(n,i),k},
\end{equation}
where $T_{(n,i),k}$ is the time duration scheduled for the sensor $i$'s transmission during the $k$th estimation cycle of process $n$.

Similar to the single-sensor case, it can be proved that the time average of $\{C_n\}_{\mathbb{N}_0}$ is equal to its ensemble average, and $J_n$ is bounded if $\myexpect{C_n}$ is. In the following, we drop out the time index $k$ and will analyze $\myexpect{C_n}$.
For the ease of notation but without loss of generality, we analyze $\myexpect{C_N}$ for  the $N$th process:
\begin{equation}\label{eq:sum}
\begin{aligned}
\myexpect{C_N} &\!=\!\! \sum_{t_1=1}^{\infty}\dots\!\!\!\! \sum_{t_{N-1}=1}^{\infty}\!\!
\!\myexpect{C_N \vert T_{(N,1)}\!=\!t_1,\dots,T_{(N,N-1)}\!=\!t_{N-1}} \\
&\times \myprob{T_{(N,1)}=t_1,\dots,T_{(N,N-1)}=t_{N-1}}.
\end{aligned}
\end{equation}

Let $\mathbf{b}_N$ denote the channel state before an estimation cycle of process $N$. From~\eqref{prob}, it is easy to obtain the conditional probability
\begin{equation}\label{eq:prob_con}
\begin{aligned}
&\myprob{T_{(N,N)}=t_N \vert T_{(N,1)}=t_1,\dots,T_{(N,N-1)}=t_{n-1}}\\
& = \vartheta(\text{diag}\{[\varsigma_1,\dots, \varsigma_{\bar{M}}]\}\mathbf{\tilde{\Xi}}(t_N)),
\end{aligned}
\end{equation}
where 
\begin{equation}
\varsigma_m \!\!\triangleq\! \myprob{\mathbf{b}_N\!=\!\mathbf{h}_m \vert T_{(N,1)}\!=\!t_1,\dots,T_{(N,N-1)}\!=\!t_{n-1}}\!\!, m\!\in\!\! \mathcal{\bar{M}}.
\end{equation}
Recall that $\mathbf{\tilde{\Xi}}(\cdot)$ was defined in~\eqref{eq:Xi0}.

Thus, from \eqref{g_fun2} and \eqref{eq:prob_con}, the conditional expectation is simplified as
\begin{equation}\label{eq:C_condition}
\begin{aligned}
&\myexpect{C_N \vert T_{(N,1)}=t_1,\dots,T_{(N,N-1)}=t_{N-1}}\\
&= \sum_{t_{N}=1}^{\infty} g\left(\sum_{i=1}^{N} t_{i}\right)\\
& \times \myprob{T_{(N,N)}=t_N \vert T_{(N,1)}=t_1,\dots,T_{(N,N-1)}=t_{n-1}}\\
&=\sum_{t_N=1}^{\infty} g\left(\sum_{i=1}^{N} t_i\right)  \vartheta(\text{diag}\{[\varsigma_1,\dots, \varsigma_{\bar{M}}]\}\mathbf{\tilde{\Xi}}(t_N)).
\end{aligned}
\end{equation}

\subsubsection{Proof of Sufficiency}
From Lemma~\ref{lem:c} and the monotonicity of $g(\cdot)$, it can be proved that for any $\epsilon'>0$ we can find $\kappa>0$ such that
\begin{equation}\label{eq:ineq_C}
	g\left(\sum_{i=1}^{N} t_i\right) \leq \kappa (\rho^2(\mathbf{A}_N)+\epsilon')^{\sum_{i=1}^{N} t_i}.
\end{equation}
Since $\varsigma_m \leq 1, \forall m\in\mathcal{\bar{M}}$, it is clear that
\begin{equation}\label{eq:ineq_v}
	\begin{aligned}
		& \vartheta(\text{diag}\{[\varsigma_1,\dots, \varsigma_{\bar{M}}]\}\mathbf{\tilde{\Xi}}(t_N))
		\leq \vartheta(\mathbf{\tilde{\Xi}}(t_N))\\
		&\leq \vartheta(\mathbf{\Xi}(t_N - (t_N\bmod L))), \text{ if } t_N>L,
	\end{aligned}
\end{equation}
%where $\varsigma_m \triangleq \myprob{\mathbf{h}(T_{(N,1)}+\dots+T_{(N,N-1)})=\mathbf{h}_m \vert T_{(N,1)}=t_1,\dots,T_{(N,N-1)}=t_{n-1}}, m\in \mathcal{\bar{M}}$, denotes the probability of the channel state prior to the scheduling of the $N$th process.
%Recall that $\mathbf{\tilde{\Xi}}$ and  $\mathbf{\Xi}$ are defined in~\eqref{eq:Xi0} \eqref{eq:Xi}, respectively.
where the last inequality is due to the property that
$\vartheta(\mathbf{M}_1\mathbf{M}_2) \leq \vartheta(\mathbf{M}_1)$ if $\mathbf{M}_1$ and $\mathbf{M}_2$ are non-negative matrices and the sum of each row of $\mathbf{M}_2$ is no higher than $1$.
%
% $\mathbf{\tilde{\Xi}}(t_N)=\mathbf{\Xi}(t_N - (t_N\bmod L)) \mathbf{E}(\mathbf{v}(t_N - (t_N\bmod L)+1))\dots\mathbf{E}(\mathbf{v}(t_N - (t_N\bmod L)+1)) \mathbf{\tilde{E}}(\mathbf{v}(t_N - (t_N\bmod L)+1))$

Taking \eqref{eq:ineq_C} and \eqref{eq:ineq_v} into \eqref{eq:C_condition}, 
\begin{equation}
	\begin{aligned}
		&\myexpect{C_N \vert T_{(N,1)}=t_1,\dots,T_{(N,N-1)}=t_{n-1}} \\
		&\leq \gamma_0 + \kappa L \sum_{j=2}^{\infty}   (\rho^2(\mathbf{A}_N)+\epsilon')^{\sum_{i=1}^{N-1} t_{i}}\\
		&\hspace{3.5cm}\times(\rho^2(\mathbf{A}_N)+\epsilon')^{jL} \vartheta(\mathbf{\Xi}((j-1)L))\\
		&\leq \gamma_0 + \kappa' \kappa L (\rho^2(\mathbf{A}_N)+\epsilon')^{\sum_{i=1}^{N-1} t_{i}}\\ &\hspace{3cm} \times \sum_{j=2}^{\infty}   
		(\rho^2(\mathbf{A}_N)+\epsilon')^{jL} (\lambda_{\infty}+\epsilon)^{(j-1)L},
	\end{aligned}
\end{equation}
where $\gamma_0, \kappa$, and $\kappa'$ are positive constant; the first inequality is obtained by dividing the infinity sum into length-$L$ segment sums and applying the inequalities \eqref{eq:ineq_C} and \eqref{eq:ineq_v}. The second inequality is obtained by
first rewriting $\mathbf{\Xi}((j-1)L)$ as $\mathbf{E}(\mathbf{\tilde{v}}_n(1))\mathbf{E}(\mathbf{\tilde{v}}_n(2))\cdots\mathbf{E}(\mathbf{\tilde{v}}_n(L))$ to the power of $(j-1)L$ and then using \eqref{eq:suf_rho} and Lemma~\ref{lem:rho} below.
%From Lemma~2 of \cite{liu2020remote}, we have the matrix power property below.
\begin{lemma}[\cite{liu2020remote}]\label{lem:rho}
	%\normalfont	
	Given an $N$-by-$N$ matrix $\mathbf{M}$, for any $\epsilon >0$, we can find $\kappa>0$ such that $
	[\mathbf{M}^k]_{i,j} < \kappa (\rho(\mathbf{M})+\epsilon)^k, \forall k\in\mathbb{N},i,j=1,2,\dots,N. \hfill \square$ 
\end{lemma}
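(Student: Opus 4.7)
My plan is to derive the bound from Gelfand's spectral radius formula applied to a chosen matrix norm, then patch up the finitely many small-$k$ cases by absorbing them into the constant $\kappa$. Concretely, fix any submultiplicative matrix norm on $\mathbb{R}^{N\times N}$ (for example the Frobenius norm $\|\cdot\|_F$). Every entry is dominated by the norm, so
\[
\bigl|[\mathbf{M}^k]_{i,j}\bigr| \leq \|\mathbf{M}^k\|_F, \qquad \forall\, i,j,k.
\]
Gelfand's formula states $\lim_{k\to\infty} \|\mathbf{M}^k\|_F^{1/k} = \rho(\mathbf{M})$, so for the prescribed $\epsilon>0$ there exists an integer $K$ such that $\|\mathbf{M}^k\|_F < (\rho(\mathbf{M})+\epsilon)^k$ for every $k\geq K$. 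This takes care of all sufficiently large $k$.

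For $k=1,\dots,K-1$ there are only finitely many entries, so the ratios $|[\mathbf{M}^k]_{i,j}|/(\rho(\mathbf{M})+\epsilon)^k$ form a finite set of real numbers. Setting
\[
\kappa \triangleq 1 + \max\!\left\{\,1,\ \max_{1\leq k<K}\ \max_{i,j}\ \frac{\bigl|[\mathbf{M}^k]_{i,j}\bigr|}{(\rho(\mathbf{M})+\epsilon)^k}\,\right\}
\]
yields a finite constant for which the claimed inequality holds uniformly in $k$ and in $(i,j)$: by construction for $k<K$, and by the Gelfand bound (together with $\kappa>1$) for $k\geq K$.

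An equally valid but more constructive alternative is to work through the Jordan canonical form $\mathbf{M}=\mathbf{P}\mathbf{J}\mathbf{P}^{-1}$ and note that the entries of $\mathbf{J}^k$ built from a block with eigenvalue $\lambda$ are sums of terms $\binom{k}{j}\lambda^{k-j}$ with $j$ up to the block size minus one; hence $|[\mathbf{J}^k]_{i,j}|$ grows at most like $k^{N-1}\rho(\mathbf{M})^k$. Multiplying through by the fixed constants coming from $\mathbf{P}$ and $\mathbf{P}^{-1}$ and using $k^{N-1}(\rho(\mathbf{M}))^k = o\bigl((\rho(\mathbf{M})+\epsilon)^k\bigr)$ gives the result.

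The only genuine obstacle is the degenerate case $\rho(\mathbf{M})=0$, since then $\rho(\mathbf{M})+\epsilon=\epsilon$ and one must verify the bound is not vacuous. This case is easy: $\mathbf{M}$ is nilpotent, so $\mathbf{M}^k=\mathbf{0}$ for all $k\geq N$ and only finitely many powers are nonzero, so $\kappa$ can again be chosen as the maximum of the finitely many ratios. Everything else is routine bookkeeping around Gelfand's formula, which is the single nontrivial input.
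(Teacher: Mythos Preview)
The paper does not prove this lemma; it merely quotes it from \cite{liu2020remote} and marks it with a $\square$, so there is no in-paper argument to compare against. Your proof via Gelfand's formula is correct and is the standard route: the Frobenius norm is submultiplicative, each entry is dominated by the norm, Gelfand's formula handles all $k\geq K$, and the finitely many smaller $k$ are absorbed into $\kappa$. The Jordan-form alternative and the nilpotent edge case are also fine. One cosmetic remark: the lemma as stated omits absolute values on $[\mathbf{M}^k]_{i,j}$, which is harmless here because every matrix to which the paper applies the lemma is entrywise nonnegative; your insertion of $|\cdot|$ is the right way to state it in general.
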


As a consequence, if $\rho^2(\mathbf{A}_N) \lambda_{\infty} <1$, then we can always find a scheduling policy providing an arbitrary small $\epsilon$ to make $\myexpect{C_N \vert T_{(N,1)}=t_1,\dots,T_{(N,N-1)}=t_{n-1}} $ bounded. Since the condition does not rely on $\{T_{(N,1)}, T_{(N,2)} \dots T_{(N,N-1)}\}$, it is straightforward to show that the policy leads to a bounded $\myexpect{C_N}$.

By applying the above method to the other $N-1$ sensors, a sufficient condition for stabilizing the $N$-sensor remote estimator is obtained as $\rho^2(\mathbf{A}_1) \lambda_{\infty} <1$, which completes the proof of sufficiency of Theorem~\ref{theo:main}.

\begin{remark}
The policy above with persistent sensor scheduling and periodic channel selection is a stability-guaranteeing policy.
Note that such a policy, which does not utilize the parallel frequency channels, is only constructed for the proof of the sufficiency of Theorem~\ref{theo:main}, and is not optimal.
\mycomment{Once the stability condition is satisfied, we can find the optimal policies in~\eqref{eq:schedule_0}, e.g., by designing suitable MDP problems, see for example~\cite{LEONG2020108759}.}
\end{remark}

%\vspace{0.5cm}
\section{Conclusions}\label{sec:con}
We have tackled the open problem: what is the fundamental requirement on the multi-sensor-multi-channel system to guarantee the existence of a sensor scheduling policy that can stabilize   remote estimation?
To solve the problem, we have proposed novel policy construction methods, and have developed an estimation-cycle based analytical approach.
We have derived a necessary and sufficient stability condition in terms of the LTI system parameters and the channel statistics.
Numerical results have shown that the condition is more effective than existing sufficient conditions available  in the literature.
Scheduling policies with stability guarantees have been derived as well.
For future work, we will consider stability analysis of a multi-control-loop system over shared wireless channels.

\section*{Appendix}

\subsection{Proof of Lemma~\ref{lem:nec_pos}}
Due to the periodicity of the Markov channel states, given any current channel state, there exists a constant $(L-1)$ such that it can reach any channel state in the $(L-1)$th time slots with non-zero probabilities~\cite{durrett2019probability}, i.e.,
\begin{equation}\label{eq:M}
	[\mathbf{L}_m \mathbf{M}^{L-1}]_{m,m'}>0, \forall m,m'\in \mathcal{\bar{M}}.
\end{equation}

Given a current channel state $\mathbf{h}_m$ and a sequence of channel selection vectors $\{\mathbf{v}_1,\mathbf{v}_2,\dots,\mathbf{v}_{L}\}$ for the next $L$ transmissions,
if there is a non-zero probability $\xi_m$ that the transmissions can be successful within $(L-1)$ time slots, we directly have
\begin{equation}\label{eq:LE}
	\max_{l=1,\dots,L-1} \vartheta(\mathbf{L}_m\mathbf{E}(\mathbf{v}_1)\mathbf{E}(\mathbf{v}_2)\cdots\mathbf{\tilde{E}}(\mathbf{v}_l))>0.
\end{equation}
Otherwise, we have $$\mathbf{L}_m\mathbf{E}(\mathbf{v}_1)\mathbf{E}(\mathbf{v}_2)\cdots\mathbf{\tilde{E}}(\mathbf{v}_{l})= \mathbf{0}, \forall l=1,\dots,L-1.$$
Since $\mathbf{M} = \mathbf{E}(\mathbf{v}_l)+\mathbf{\tilde{E}}(\mathbf{v}_l), \forall l=1,\dots,L-1$,
it is clear that
\begin{equation}
	\mathbf{L}_m \mathbf{M}^{L-1} = \mathbf{L}_m\mathbf{E}(\mathbf{v}_1)\mathbf{E}(\mathbf{v}_2)\cdots\mathbf{E}(\mathbf{v}_{L}).
\end{equation}
By using \eqref{eq:M} and the fact that $\vartheta(\mathbf{\tilde{E}}(\mathbf{v}_L))\neq \mathbf{0}$, we have $\vartheta(\mathbf{L}_m\mathbf{E}(\mathbf{v}_1)\mathbf{E}(\mathbf{v}_2)\cdots\mathbf{E}(\mathbf{v}_{L-1})\mathbf{\tilde{E}}(\mathbf{v}_{L}))>0$. This completes the proof of Lemma~\ref{lem:nec_pos}.

\subsection{Proof of Lemma~\ref{lem:tech}}
We define the optimal index $i^\star$ such that
\begin{equation}
	\begin{aligned}
		i^\star \triangleq \arg\max_{i\in \mathcal{N}} \rho\left(\mathbf{L}_i\mathbf{M}_1\mathbf{M}_2\cdots\mathbf{M}_L\right).
	\end{aligned}
\end{equation}
By using Gelfand corollaries~\cite{horn2012matrix}, we have
\begin{equation}
	\rho\left(\mathbf{L}_{i^\star}\mathbf{M}_1\mathbf{M}_2\cdots\mathbf{M}_L\right)\leq \rho(\mathbf{L}_{i^\star})\rho(\mathbf{M}_{1} \mathbf{M}_{2}\cdots\mathbf{M}_{L})
\end{equation}
and thus
\begin{equation} \label{ieq:low}
	\begin{aligned}
		&\liminf\limits_{L \rightarrow \infty} \!\rho\left(\mathbf{L}_{i^\star}\mathbf{M}_1\mathbf{M}_2\cdots\mathbf{M}_L\right)^{\frac{1}{L}}\\
		&\!\!\leq \!
		\liminf\limits_{L \rightarrow \infty}
		\Big(\!\rho(\mathbf{L}_{i^\star})\rho(\mathbf{M}_{1} \mathbf{M}_{2}\cdots\mathbf{M}_{L})\!\Big)^{\!\frac{1}{L}} 
		\!\!\!=\! \liminf\limits_{L \rightarrow \infty}
		\!\rho(\mathbf{M}_{1}\cdots\mathbf{M}_{L})^{\!\frac{1}{L}}.
	\end{aligned}
\end{equation}

From the definition of $\mathbf{L}_i$, it is clear that
\begin{equation}
	\sum_{i=1}^{N} \mathbf{L}_i \mathbf{M}_1\mathbf{M}_2\cdots\mathbf{M}_L =\mathbf{M}_1\mathbf{M}_2\cdots\mathbf{M}_L.
\end{equation}
From the matrix norm property that the spectral radius of a sum of matrices is no larger than the sum of the spectral radius of the individuals~\cite{horn2012matrix}, we have
\begin{equation}
	\begin{aligned}
		\rho\left(\mathbf{M}_1\mathbf{M}_2\cdots\mathbf{M}_L\right) 
		\leq \sum_{i=1}^{N} \rho\left(\mathbf{L}_i \mathbf{M}_1\mathbf{M}_2\cdots\mathbf{M}_L\right)\\
		\leq N \rho\left(\mathbf{L}_{i^\star} \mathbf{M}_1\mathbf{M}_2\cdots\mathbf{M}_L\right),
	\end{aligned}
\end{equation}
and thus
\begin{equation}\label{ieq:up}
	\liminf\limits_{L \rightarrow \infty} \rho\left(\mathbf{M}_1\mathbf{M}_2\cdots\mathbf{M}_L\right)^{\frac{1}{L}}
	\leq \liminf\limits_{L \rightarrow \infty} \rho\left(\mathbf{L}_{i^\star} \mathbf{M}_1\mathbf{M}_2\cdots\mathbf{M}_L\right)^\frac{1}{L}.
\end{equation}

Using the inequalities \eqref{ieq:low} and \eqref{ieq:up}, this completes the proof of Lemma~\ref{lem:tech}.

    \balance
    
	\ifCLASSOPTIONcaptionsoff
	\newpage
	\fi

	\bibliographystyle{IEEEtran}
%	\bibliography{IEEEabrv,cite}
%\vspace{-1cm}	
% Generated by IEEEtran.bst, version: 1.14 (2015/08/26)
% Generated by IEEEtran.bst, version: 1.14 (2015/08/26)

\end{document}